\begin{document}
\newtheorem{mydef}{Definition}
\title{Programming Discrete Distributions with Chemical Reaction Networks\thanks{This research is supported by a Royal Society Research Professorship and by ERC AdG VERIWARE.}}


\author{Luca Cardelli \and Marta Kwiatkowska  \and Luca Laurenti }

\institute{Luca Cardelli \at
             Microsoft Research, Cambridge UK \\
              Department of Computer science, University of Oxford\\
              \email{luca@microsoft.com} \\          
          \and
            Marta Kwiatkowska\at
              Department of Computer science, University of Oxford\\
              \email{marta.kwiatkowska@cs.ox.ac.uk}\\
              \and
            Luca Laurenti \at
              Department of Computer science, University of Oxford\\
              \email{luca.laurenti@cs.ox.ac.uk}\\
}

\date{Received: date / Accepted: date}

\maketitle

\begin{abstract}
We explore the range of probabilistic behaviours that can be engineered with Chemical Reaction Networks (CRNs). 
We give methods to ``program'' CRNs so that their steady state is chosen from some desired target distribution that has finite support in $\mathbb{N}^m$, with $m \geq 1$. Moreover, any distribution with countable infinite support can be approximated with arbitrarily small error under the $L^1$ norm. We also give optimized schemes for special distributions, including the uniform distribution. Finally, we formulate a calculus to compute on distributions that is complete for finite support distributions, and can be compiled to a restricted class of CRNs that at steady state realize those distributions.
\keywords{Stochastic Chemical Reaction Networks, Discrete Distributions, Quantitative Reasoning}
\end{abstract}

\section{Introduction}
\label{intro}
Individual cells and viruses operate in a noisy environment and molecular interactions are inherently stochastic. How cells can tolerate and take advantage of noise (stochastic fluctuations) is a question of primary importance. It has been shown that noise has a functional role in cells \cite{eldar2010functional}; indeed, some critical functions depend on the stochastic fluctuations of molecular populations 
and would be impossible in a deterministic setting.
For instance, noise is fundamental for probabilistic differentiation of strategies in organisms, and is a key factor for evolution and adaptation \cite{arkin1998stochastic}. In Escherichia coli, randomly and independently of external inputs, a small sub-population of cells enters a non-growing state in which they can elude the action of antibiotics that can only kill actively growing bacterial cells. Thus, when a population of E. coli cells is treated with antibiotics, the persisted cells survive by virtue of their quiescence before resuming growth  \cite{losick2008stochasticity}. 
This is an example in which molecular systems compute by producing a distribution.  In other cases cells need to shape noise and compute on distributions instead of simply mean values. For example, in \cite{schmiedel2015microrna} the authors show, both mathematically and experimentally, that microRNA confers precision on  the protein expression: it shapes the noise of genes in a way that decreases the intrinsic noise in protein expression, maintaining its expected value almost constant. 
Thus, although fundamentally important, the mechanisms used by cells to compute in a stochastic environment are not well understood.
 
Chemical Reaction Networks (CRNs) with mass action kinetics are a well studied formalism for modelling biochemical systems,
more recently also
used as a formal programming language \cite{Chen2013}. It has been shown that any CRN can be physically implemented by a corresponding DNA strand displacement circuit in a well-mixed solution \cite{soloveichik2010dna}. DNA-based circuits thus have the potential to operate inside cells and control their activity. Winfree and Qian have also shown that CRNs can be implemented on the surface of a DNA nanostructure \cite{qian2014parallel}, 
enabling localized computation and engineering biochemical systems where the molecular interactions occur between few components. When the number of interacting entities is small, the stochastic fluctuations intrinsic in molecular interactions play a predominant role in the time evolution of the system. 
As a consequence, ``programming'' a CRN to provide a particular probabilistic response for a subset of species, for example in response to environmental conditions, is important for engineering complex biochemical nano-devices and randomized algorithms. 
In this paper,  we explore the capacity of CRNs to ``exactly program'' discrete probability distributions. 
That is, we give methods such that the steady state distribution of a CRN can be chosen from some desired target distribution.
We aim  to characterize the probabilistic behaviour that can be obtained, exploring both the capabilities of CRNs for producing distributions and for computing on distributions by composing them.

\vspace{0.5em}
\noindent
$\mathbf{Contributions.}$
We show that at steady state CRNs are able to compute any distribution with finite support in $\mathbb{N}^m$, with $m\geq 1$. We propose an algorithm 
to systematically ``program'' a CRN so that at steady state it produces any given finite support distribution. Moreover, any distribution with countable infinite support can be approximated with arbitrarily small error under the $L^1$ norm. 
The resulting network has a number of reactions linear in the dimension of the support of the distribution and the output is produced monotonically allowing composition. 
Since distributions with large support can result in unwieldy networks, we also give optimised networks for special distributions, including a novel scheme for the uniform distribution.
We formulate a calculus that is complete for finite support distributions, which can be compiled to a restricted class of CRNs that at steady state compute those distributions. The resulting CRNs are generally more compact with respect to the ones derived from direct approach.
The calculus {  is equivalent to the baricentric algebra presented in \cite{mardare2016quantitative}}, and allows for modelling of external influences on the species. 
Our results are of interest for a variety of scenarios in systems and synthetic biology. For example, they can be used to program a biased stochastic coin or a uniform distribution, thus enabling implementation of randomized algorithms and protocols in CRNs.

Preliminary version of this work appeared as \cite{Cardelli2016}. This paper includes an extended description with illustrative examples and proofs of the results. 

\vspace{0.5em}
\noindent
$\mathbf{Related \,\,\,work.}$
It has been shown that CRNs with stochastic semantics are Turing complete, up to an arbitrarily small error \cite{soloveichik2008computation}. If we assume error-free computation, their computational power decreases: they can decide the class of the semi-linear predicates \cite{angluin2007computational} and compute semi-linear functions \cite{chen2014deterministic}. 
A first attempt to model distributions with CRNs can be found in \cite{fett2007synthesizing}, where the problem of producing a single distribution is studied. However, their circuits are approximated and cannot be composed to compute operations on distributions. 

\section{Chemical Reaction Networks}\label{sec-stoch}
A \emph{chemical reaction network (CRN)} $(\Lambda,R)$ is a pair of finite sets, where $\Lambda$ is the set of \emph{chemical species}, $|\Lambda|$ denotes its size, and $R$ is a set of reactions. 
A \emph{reaction} $\tau \in R$ is a triple $\tau=(r_{\tau},p_{\tau},k_{\tau})$, where $r_{\tau} \in  \mathbb{N}^{|\Lambda|}$ is the \emph{source complex}, $p_{\tau} \in  \mathbb{N}^{|\Lambda|}$ is the \emph{product complex} and $k_{\tau} \in \mathbb{R}_{>0} $ is the coefficient associated to the rate of the reaction, where we assume $k_{\tau}=1$ if not specified; $r_{\tau}$ and $p_{\tau}$  represent the stoichiometry of reactants and products.
Given a reaction $\tau_1=(  [1,0,1],[0,2,0],k_1 )$ we often refer to it as $\tau_1 : \lambda_1 + \lambda_3 \, \rightarrow^{k_1}  \,    2\lambda_2 $.
The \emph{net change (or state change)} associated to $\tau$ is defined by $\upsilon_{\tau}=p_{\tau} - r_{\tau}$. 


We assume that the system is well stirred, that is, the probability of the next reaction occurring between two molecules is independent of the location of those molecules, at fixed volume $V$ and temperature. Under these assumptions a \emph{configuration} or \emph{state} of the system $x \in \mathbb{N}^{|\Lambda|}$ is given by the number of molecules of each species. 

A  \emph{chemical reaction system} (CRS) $C=(\Lambda,R,x_0)$ is a tuple where $(\Lambda,R)$ is a CRN and $x_0 \in \mathbb{N}^{|\Lambda|}$ represents its initial condition. 

{ 
\subsection{Stochastic Semantics}
The stochastic semantics of a CRS is given in terms of a continuous time Markov chain (CTMC). Here, we introduce the semantics according to the representation of Markov processes proposed by  Ethier and Kurtz (Theorem 4.1 Chapter 6 \cite{ethier2009markov}). Such representation is equivalent to the classical model described by the Chemical Master Equation, but much more compact.
It allows us to represent the CTMC in terms of stochastic equations, which have a similar structure to the deterministic rate equations.
We illustrate the semantics with the help of Example \ref{Introd}. Below we present Poisson processes, as they will be used in the semantics and in the paper.
A building block of the mathematical models we use in the paper is a \emph{counting process}. Intuitively, a counting process $Y$ is a process such that $Y(t)$ counts the number of times that a particular phenomenon has been observed by time $t$.
\begin{mydef}{(Counting process)}
$Y$ is a counting process if $Y(0)=0$ and $Y$ is constant except for jumps of $+1.$
\end{mydef}
\begin{mydef}{(Poisson process)}
A counting process  $Y$ is a Poisson  process if:
\begin{itemize}
\item Number of observations in disjoint time intervals are independent random variables, that is, $Y(t_k)-$ $Y(t_{k-1}),$ $k \in \mathbb{N}$, are independent random variables.
\item The distribution of $Y(t+\Delta t)-Y(t)$ is independent of $t$.
\end{itemize}
\end{mydef}
\begin{theorem}{(\cite{anderson2015stochastic})}
If $Y$ is a Poisson process, then there exists a constant $\lambda >0$ such that for $t_2>t_1 \in \mathbb{R}_{\geq 0}$ and $k \in \mathbb{N}$ it holds that
$$ Prob(Y(t_2)-Y(t_1)=k)=\frac{(\lambda(t_2-t_1))^k}{k!}e^{-\lambda (t_2 - t_1)}$$
That is, $Y(t_2)-Y(t_1)$ is Poisson distributed with parameter $\lambda(t_2-t_1).$ 
\end{theorem}
If $\lambda=1,$ we call $Y$ a \emph{unit Poisson process}.

\begin{example}\label{Introd}
Consider the CRN described by the following reactions
$$ \tau_1: \lambda_1 + \lambda_2 \to^{k_1} \lambda_1 + \lambda_1;\quad \tau_2: \lambda_1 + \lambda_2 \to^{k_2} \lambda_2 + \lambda_2$$
and let $X(0)\in \mathbb{N}^2$ be the initial condition. Then, the state of the system at time $t\geq 0$ will be given by $X(0)$ plus the number of times that each reaction have fired between $[0,t]$ multiplied by the respective state change vector. That  is,
$$X(t)=X(0) + \begin{pmatrix}
  1  \\
  -1 
 \end{pmatrix} R_{\tau_1}(t) + \begin{pmatrix}
  -1  \\
  1 
 \end{pmatrix} R_{\tau_2}(t)$$  where $R_{\tau_1}(t),R_{\tau_2}(t)$ are counting processes that count the number of times that the particular reaction has fired until time $t$. We now assume that $R_{\tau}$ are independent, unit Poisson processes that depend on the propensity rate of $\tau$. More precisely, $R_{\tau}(t)=Y_{\tau}(\int_0^t \alpha(X(s))ds),$ where $Y_{\tau}(\int_0^t \alpha(X(s))ds)$ is a unit Poisson process with intensity $\int_0^t \alpha(X(s))ds$. Intuitively, $\int_0^t \alpha(X(s))ds$ gives the time interval in which counting events for the unit Poisson process.  
Under this modelling assumptions it holds that \cite{ethier2009markov}
\begin{align*} 
Prob(&Y_{\tau}(\int_0^{t+\Delta t} \alpha_{\tau}(X(s))ds)-\\
&Y_{\tau}(\int_0^{t}\alpha_{\tau}(X(s)ds)>0|\forall s \in [0,t,X(s))\approx\\&\quad \quad \quad \quad \quad \quad \quad \quad \quad \quad \quad \quad \quad \quad \quad \alpha_{\tau}(X(t)) \Delta t . 
\end{align*}

That is, the probability that a reaction $\tau$ happens in the next $\Delta t$, at the first order, is given by the propensity rate of $\tau$ at time $t$ multiplied by $\Delta t$, exactly as in the classical stochastic representation \cite{van1992stochastic} of CRNs. 
At this point, for our model, we can write its stochastic model as
\begin{align*}
 X(t)&=X(0) +\\ &\begin{pmatrix}
  1  \\
  -1 
 \end{pmatrix} Y_{\tau_1}(k_{\tau_1}\int_0^t X_{\lambda_1}(s)X_{\lambda_2}(s)ds) +\\
 & \begin{pmatrix}
  -1  \\
  1 
 \end{pmatrix} Y_{\tau_2}(k_{\tau_2}\int_0^t X_{\lambda_1}(s)X_{\lambda_2}(s)ds).\end{align*} Theorem \ref{CME} below shows that the forward equation associated with the Markov process described in the previous stochastic equation is exactly the \emph{Chemical Master Equation (CME)}.
 \end{example}

 \begin{mydef}
Given a CRS $C=(\Lambda,R,x_0),$ we define its stochastic semantics at time $t$ as
 \begin{align}\label{StochasticSemantics}
 X^C(t)=x_0 + \sum_{\tau \in R} \upsilon_{\tau} Y_{\tau}(\int_0^t \alpha_{\tau}(X^C(s)ds)) 
 \end{align}
 where $Y_{\tau} $ are unit Poisson processes, independent of each other. 
 \end{mydef}

\begin{theorem}{\cite{ethier2009markov}}\label{CME}
Let $C=(\Lambda,R,x_0)$ be a CRS and $X^C$ be the stochastic process as defined in Equation \eqref{StochasticSemantics}. Define $Prob(X^C(t)=x|X^C(0)=x_0)=P^C(t)(x)$. Assume that, for each $\tau \in R$ and $t\in \mathbb{R}_{\geq 0},$
$X^C(t)< \infty,$ then 
\begin{align}
&\frac{d P^C(t)(x)}{dt}=\nonumber\\
&\quad\sum_{\tau \in R} P^C(t)(x-\upsilon_{\tau})\alpha_\tau(X^C(t))-P^C(t)(x)\alpha_\tau(X^C(t)).\label{CMEEq}
\end{align}
\end{theorem}
}$P^C(t)(x)$ represents the transient evolution of $X^C$, and can be calculated exactly by solving directly the Chemical Master Equation or by approximation techniques \cite{laurenti2015stochastic,laurenti2016stochastic,bortolussi2016approximation}. 
\begin{mydef}
The steady state distribution (or limit distribution) of $X^C$ is defined as $\pi^C= \lim_{t \to \infty} P^C(t).$
\end{mydef}
When clear from the context, we omit the superscript indicating the CRN and simply write $\pi$ instead of $\pi^C$.  
$\pi$ calculates the percentage of time, in the long-run, that $X$ spends in each state $x \in S$.
If $S$ is finite, 
then the above limit distribution always exists and is unique \cite{kwiatkowska2007stochastic}. 
In this paper we focus on discrete distributions, and will sometimes conflate the term distribution with probability mass function, defined next.
\begin{mydef}\label{prob mass func-defn}
Suppose that $M: S \rightarrow \mathbb{R}^m$ with $m >0$ is a discrete random variable defined on a countable sample space $S$. Then the probability mass function (pmf) $f: \mathbb{R}^m \rightarrow [0, 1] $ for $M$ is defined as
    $f(x) = Prob(s \in S  \mid  M(s) = x).  $
\end{mydef}
For a pmf $\pi : \mathbb{N}^{m} \rightarrow [0,1]$ we call $J=\{y\in \mathbb{N}^{m}|\pi(y)\neq 0\}$ the support of $\pi$.  A pmf is always associated to a discrete random variable whose distribution is described by the pmf. Sometimes, when we refer to a pmf, we imply the associated random variable. Given two pmfs $f_1$ and $f_2$ with values in $\mathbb{N}^m$, $m>0$, we define the $L^1$ norm (or distance) between them as $d_1(f_1,f_2)=\sum_{n\in \mathbb{N}^m}(|f_1(n)-f_2(n)|)$.
Note that, as $f_1,f_2$ are pmfs, then $d_1(f_1,f_2)\leq 2$. 
It is worth stressing that, given the CTMC $X$, for each $t \in \mathbb{R}_{\geq 0}$, $X(t)$ is a random variable defined on a countable state space. As  a consequence, its distribution is given by a pmf. Likewise, the limit distribution of a CTMC, if it exists, is a pmf.

\begin{mydef}
Given $C=(\Lambda,R)$ and $\lambda \in \Lambda$,  we define $\pi_{\lambda} (k)=\sum_{ \{x\in S |x(\lambda)=k\}} \pi(x)$ as the probability that for $t \rightarrow \infty$, in $X^C$, there are $k$ molecules of $\lambda$.
\end{mydef}
$\pi_{\lambda}$ is a pmf representing the steady state distribution of species $\lambda$.

\section{On computing finite support distributions with CRNs}
We now show that, for a pmf with { finite} support in $\mathbb{N}$, we can always build a CRS such that, at steady state (i.e. for $t \rightarrow \infty$) the random variable representing the molecular population of a given species in the CRN {  is equal to that distribution. Such result allows us to approximate any distribution with countable infinite support with arbitrarily small error under the $L^1$ norm}. 
The result is then generalised to distributions with domain in $\mathbb{N}^m$, with $m \geq 1$. The approximation is exact in case of finite support.
\subsection{Programming pmfs}
\begin{mydef}\label{CRN univ}
Given $f: \mathbb{N} \rightarrow [0,1]$ with finite support $J=(z_1,...,z_{|J|})$ such that $\sum_{i=1}^{|J|}f(z_i)=1$, we define the CRS $C_{f}=(\Lambda,R,x_0)$ as follows. $C_{f}$ is composed of $2|J|$ reactions and $2|J|+2$ species.
For any $z_i \in J$ we have two species $\lambda_i,\lambda_{i,i} \in \Lambda$ such that $x_0(\lambda_i)=z_i$ and $x_0(\lambda_{i,i})=0$. Then, we consider a species $\lambda_z \in \Lambda$ such that $x_0(\lambda_z)=1$, and the species $\lambda_{out} \in \Lambda$, which represents the output of the network and such that  $x_0(\lambda_{out})=0$.
 For every $z_i \in J$, $R$ has the following two reactions: $\tau_{i,1}: \lambda_z \rightarrow^{f(z_i)}  \lambda_{i,i}$ and $\tau_{i,2}: \lambda_i + \lambda_{i,i}  \rightarrow \lambda_{out} + \lambda_{i,i}$.
\end{mydef}
\begin{example}\label{ex-1}
Consider the probability mass function $f: \mathbb{N}\rightarrow [0,1]$ defined as
$f(y)= 
\left\{ 
  \begin{array}{l l}
    \frac{1}{6}, \, \, \, \,\,\, \text{if $y=2$}\\
    \frac{1}{3}, \, \, \, \,\,\, \text{if $y=5$}\\
    \frac{1}{2}, \, \, \,  \,\,\, \text{if $y=10$}\\
    0, \, \, \, \,\,\, \text{otherwise}\\
  \end{array} \right.
 $.
Let $\Lambda=\{\lambda_{1}, \lambda_{2}, \lambda_{3},$ $ \lambda_z,  \lambda_{1,1},  \lambda_{2,2},   \lambda_{3,3}, \lambda_{out} \}$, then we build the CRS $C=(\Lambda,R,x_0)$ following Definition \ref{CRN univ}, where
$R$ is given by the following set of reactions:
\[\lambda_z \rightarrow^{\frac{1}{6}}   \lambda_{1,1};\,\,\,\,\, \lambda_z \rightarrow^{\frac{1}{3}}    \lambda_{2,2};\,\,\,\,\,
  \lambda_z \rightarrow^{\frac{1}{2}}    \lambda_{3,3};\]
\[\lambda_{1} + \lambda_{1,1} \rightarrow^{1} \lambda_{1,1} + \lambda_{out};\,\,\,\,\,
\lambda_{2} + \lambda_{2,2} \rightarrow^{1} \lambda_{2,2} + \lambda_{out};\,\,\,\,\,\]
 \[\lambda_{3} + \lambda_{3,3} \rightarrow^{1} \lambda_{3,3} + \lambda_{out}.\]
The initial condition $x_0$ is 
$ x_0 (\lambda_{out})=x_0(\lambda_{1,1})=x_0(\lambda_{2,2})$ $=x_0(\lambda_{3,3})=0;$
$x_0(\lambda_{1})=2; \, x_0(\lambda_{2})=5;$  
$x_0(\lambda_{3})=10;$ $x_0(\lambda_z)=1.$
Theorem \ref{th:SingleCase} ensures $\pi_{\lambda_{out}}=f$.
\end{example}
\begin{theorem}\label{th:SingleCase}
Given a pmf  $f: \mathbb{N}  \rightarrow [0, 1] $ with finite support $J$, the CRS $C_{f}$ as defined in Definition \ref{CRN univ} is such that $\pi^{C_f}_{\lambda_{out}}=f$.
\end{theorem}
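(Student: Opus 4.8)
The plan is to analyze the continuous-time Markov chain $X^{C_f}$ directly, observing that from $x_0$ its dynamics split into two phases with essentially a single random choice, and then to identify $\pi^{C_f}_{\lambda_{out}}$ with the resulting distribution over absorbing states. First I would note that both reaction types preserve the total molecular count (each $\tau_{i,1}$ turns a $\lambda_z$ into a $\lambda_{i,i}$, and each $\tau_{i,2}$ turns a $\lambda_i$ into a $\lambda_{out}$), so the reachable state space $S$ from $x_0$ is finite. The limit distribution $\pi^{C_f}$ therefore exists, and because every trajectory must eventually stop (the single $\lambda_z$ is consumed and then a finite amount of $\lambda_i$ is consumed), it is supported on the absorbing states of $X^{C_f}$.

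Second, I would analyze the selection phase. At $x_0$ the only enabled reactions are the $\tau_{i,1}$, since every $\tau_{i,2}$ needs $\lambda_{i,i}\geq 1$ while $x_0(\lambda_{i,i})=0$. Because $x_0(\lambda_z)=1$, mass-action kinetics give propensity $\alpha_{\tau_{i,1}}(x_0)=f(z_i)$, so in the embedded jump chain the first reaction to fire is $\tau_{i,1}$ with probability $f(z_i)/\sum_{j}f(z_j)=f(z_i)$, using $\sum_j f(z_j)=1$. Once some $\tau_{i,1}$ has fired we have $\lambda_z=0$, which permanently disables every $\tau_{j,1}$, and $\lambda_{i,i}=1$.

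Third, the production phase is deterministic in its outcome. With $\lambda_{i,i}=1$ acting as a catalyst that is never consumed, the only enabled reaction is $\tau_{i,2}$, which fires exactly $z_i$ times, each firing converting one $\lambda_i$ into one $\lambda_{out}$ while leaving $\lambda_{i,i}=1$. When $\lambda_i$ reaches $0$ no reaction is enabled, so the chain is absorbed in a state with $x(\lambda_{out})=z_i$. (If $z_i=0$ the chain is absorbed immediately with $\lambda_{out}=0$, still the correct value.) Hence absorption into the set $\{x\in S: x(\lambda_{out})=z_i\}$ occurs exactly when $\tau_{i,1}$ wins the selection race, i.e. with probability $f(z_i)$.

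Finally I would assemble these facts: since the $z_i$ are distinct and every trajectory is absorbed with $\lambda_{out}$ equal to exactly one $z_i$, the limit distribution satisfies $\pi^{C_f}_{\lambda_{out}}(z_i)=f(z_i)$ for each $i$ and $\pi^{C_f}_{\lambda_{out}}(k)=0$ for $k\notin J$, which is precisely $\pi^{C_f}_{\lambda_{out}}=f$. I expect the main obstacle to be the first bridging step: justifying rigorously that the long-run limit distribution of the CTMC from the fixed initial condition $x_0$ coincides with the absorption probabilities of the chain, and that the single branching at $x_0$ governs the entire outcome. This requires invoking standard CTMC theory (competing exponential clocks, or the embedded discrete-time chain) together with the observation that no $\tau_{i,2}$ can ever pre-empt the selection, which holds because all $\lambda_{i,i}$ are initially absent.
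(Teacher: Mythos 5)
Your proposal is correct and follows essentially the same argument as the paper's own proof: finiteness of the reachable state space guarantees the limit distribution exists, the race among the reactions $\tau_{i,1}$ at $x_0$ selects $z_i$ with probability $f(z_i)$, and the subsequent catalytic conversion of $\lambda_i$ into $\lambda_{out}$ is deterministic in outcome, so the absorption value of $\lambda_{out}$ is $z_i$. Your version is somewhat more careful than the paper's (mass-conservation argument for finiteness, the $z_i=0$ edge case, distinctness of the $z_i$, and the explicit identification of the limit distribution with the absorption probabilities), but the underlying route is identical.
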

\begin{proof}
Let $J=( z_1,..,z_{|J|} )$ be the support of $f$, and $|J|$ its size. Suppose  $|J|$ is finite,
then the set of reachable states from $x_0$ is finite by construction 
and the limit distribution of $X^{C_f}$, the induced CTMC, exists.
By construction, in the initial state $x_0$ only reactions of type $\tau_{i,1}$ can fire, and the probability that a specific $\tau_{i,1}$ fires first is exactly:
\begin{align*}
\frac{\alpha_{\tau_{i,1}}(x_0)}{\sum_{j=1}^{|J|}\alpha_{\tau_{j,1}}(x_0)}=&\frac{f(z_i) \cdot 1 }{\sum_{j=1}^{|J|}f(z_j) \cdot 1 } =\\
&\quad \quad \quad \frac{f(z_i) }{\sum_{j=1}^{|J|}f(z_j) }=\frac{f(z_i)}{1}=f(z_i)
\end{align*}
Observe that the firing of the first reaction uniquely defines the limit distribution of $X^{C_{f}}$, because $\lambda_z$ is consumed immediately and only reaction $\tau_{i,2}$ can fire, with no race condition, until $\lambda_i$ are consumed.
This implies that at steady state $\lambda_{out}$ will be equal to $x_0(\lambda_{i})$, and this happens with probability $f(x_0(\lambda_{i}))$. Since $x_0(\lambda_{i})=z_i$ for $i \in [1,|J|]$, we have $\pi_{\lambda_{out}}^{C_f}=f$.
\hfill $\square$
\end{proof}
Then, we can state the following corollary of Theorem \ref{th:SingleCase}.
\begin{corollary}\label{univer}
Given a pmf  $f: \mathbb{N}  \rightarrow [0, 1] $ with countable support $J$, we can always find a finite CRS $C_{f}$ such that $\pi^{C_f}_{\lambda_{out}}=f$ with arbitrarily small error under the $L^1$ norm.
\end{corollary}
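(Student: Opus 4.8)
The plan is to reduce the countable-support case to the finite-support case already settled by Theorem~\ref{th:SingleCase}, by truncating $f$ to a finite sub-support and controlling the discarded mass. Since $f$ is a pmf with support $J$, the series $\sum_{n \in J} f(n)$ converges to $1$, so its tails vanish. Given a target error $\varepsilon > 0$, I would first enumerate $J = \{z_1, z_2, \ldots\}$ and choose $N$ large enough that the finite subset $J' = \{z_1, \ldots, z_N\}$ captures almost all the mass, i.e.\ $p := \sum_{i=1}^{N} f(z_i) > 1 - \varepsilon/2$. Summability of $\sum_{n\in J} f(n)$ guarantees such an $N$ exists.

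Next I would build a genuine finite-support pmf $\hat f$ from this truncation so that Theorem~\ref{th:SingleCase} becomes applicable. The cleanest choice is to renormalise: set $\hat f(z_i) = f(z_i)/p$ for $z_i \in J'$ and $\hat f(n) = 0$ otherwise. Then $\hat f$ is supported on the finite set $J'$ and satisfies $\sum_{i=1}^{N}\hat f(z_i) = 1$, so it meets the hypotheses of Definition~\ref{CRN univ}. Applying Theorem~\ref{th:SingleCase} to $\hat f$ yields a finite CRS $C_{\hat f}$ with $\pi^{C_{\hat f}}_{\lambda_{out}} = \hat f$ exactly, and I then take $C_f := C_{\hat f}$ as the desired approximating network.

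It remains to bound $d_1(f, \hat f)$. Splitting the sum over $J'$ and its complement and using $\hat f(z_i) = f(z_i)/p \geq f(z_i)$ gives
\[
d_1(f,\hat f) = \sum_{i=1}^{N} f(z_i)\Bigl(\tfrac{1}{p} - 1\Bigr) + \sum_{n \notin J'} f(n) = (1 - p) + (1 - p) = 2(1-p) < \varepsilon,
\]
where I used $\sum_{i=1}^{N} f(z_i) = p$ and $\sum_{n \notin J'} f(n) = 1 - p$. Thus the constructed CRS approximates $f$ to within $\varepsilon$ in $L^1$. The argument contains no serious obstacle, only care in two places: one must verify that $\hat f$ is a bona fide pmf, so that the renormalised rates $f(z_i)/p$ attached to the reactions $\tau_{i,1}$ are positive and the machinery of Theorem~\ref{th:SingleCase} applies verbatim; and one must track both error contributions—the redistributed mass on $J'$ and the truncated tail—each of which equals $1-p$, so the factor of $2$ forces the choice $p > 1 - \varepsilon/2$ rather than $p > 1 - \varepsilon$.
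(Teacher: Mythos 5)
Your proof is correct and takes essentially the same route as the paper: truncate the support to a finite set capturing mass $p > 1-\varepsilon/2$ and invoke Theorem~\ref{th:SingleCase} on the resulting finite-support distribution. The only difference is that you explicitly renormalise the truncation into a genuine pmf $\hat f$ before applying the theorem, which is in fact more careful than the paper's own argument—there the truncated $f'$ sums to $p<1$ and so does not literally satisfy the hypothesis $\sum_i f(z_i)=1$ of Definition~\ref{CRN univ} (the CRS race condition renormalises implicitly), and your factor-of-$2$ error accounting makes transparent why the tail must be cut at $\varepsilon/2$ rather than $\varepsilon$.
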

\begin{proof}
Let $J=\{z_1,...,z_{|J|} \}$. Suppose $J$ is (countably) infinite, that is, $|J| \rightarrow \infty$. Then, we can always consider an arbitrarily large but finite number of points in the support, such that the probability mass lost is arbitrarily small, and applying Definition \ref{CRN univ} on this finite subset of the support we have the result.

In order to prove the result consider the function $f'$ with support $J'=\{z_1,...,z_{k} \}$, $k\in \mathbb{N}$, such that $f(z_i)=f'(z_i)$, for all $i \in \mathbb{N}_{\leq k}$.
Consider the series $\sum_{i=1}^{\infty}f(n)$. This is an absolute convergent series by definition of pmf. Then, we have that $\lim_{i\rightarrow \infty}f(i)=0$ and,
for any $\epsilon > 0$, we can choose some $\kappa_\varepsilon \in \mathbb{N}$, such that:
\begin{align*}
\forall k>\kappa_\varepsilon \quad |\sum_{i=1}^k f'(i)-\sum_{i=1}^\infty f(i)| < \frac{\epsilon}{2}.
\end{align*}
This implies that for $k>\kappa_\varepsilon$ given $f'_k=\sum_{i=1}^k f'(i)$ we have, $d_1(f'_k,f)<\epsilon$.
\hfill $\square$


\end{proof}
\noindent
The following remark shows that the need for precisely tuning the value of reaction rates in Theorem \ref{th:SingleCase} can be dropped by introducing some auxiliary species.
\begin{remark}\label{SingAlt}
In practice, tuning the rates of a reaction can be difficult  or impossible. However, it is possible to modify the CRS derived using Definition \ref{CRN univ} in such a way the probability value is not encoded in the rates, and we just require that all reactions have the same rates.
We can do that by using some auxiliary species $\Lambda_c=\{\lambda_{c_1},\lambda_{c_2},...,\lambda_{c_{|\Lambda_c|}} \}$. Then, the reactions $\tau_{i,1}$ for $i\in [1,J]$ become $\tau_{i,1}: \lambda_z+\lambda_{c_i} \rightarrow^{k}  \lambda_{i,i}$, for $k\geq 0$, instead of $\tau_{i,1}: \lambda_z \rightarrow^{f(y_i)}  \lambda_{i,i}$, as in the original definition. The initial condition of $\lambda_{c_i}$ is $x_0(\lambda_{c_i})=f(y_i)\cdot L$, where $L \in \mathbb{N}$ is such that for $j \in [1,|J|]$ and $J=\{ z_1,...,z_{|J|} \}$ we have that $f(z_j) \cdot L$ is a natural number, assuming all the $f(z_j)$ are rationals.
\end{remark}

\begin{remark}\label{ExternalInfluence}
In biological circuits the probability distribution of a species may depend on some external conditions. For example, the \emph{lambda Bacteriofage} decides to lyse or not to lyse with a probabilistic distribution based also on environmental conditions \cite{arkin1998stochastic}. 
Programming similar behaviour is possible by extension of Theorem \ref{th:SingleCase}.
For instance, suppose, we want to program a switch that with rate $50+Com$ goes to state $O_1$, and with rate $5000$ goes to a different state $O_2$, where $Com$ is an external input. 
To program this logic we can use the following reactions: $\tau_{1,1} : \lambda_z + \lambda_{c_1} \rightarrow^{k_1} \lambda_{O_1}$ and $\tau_{1,2} : \lambda_z + \lambda_{c_2} \rightarrow^{k_1} \lambda_{O_2}$, where $\lambda_{O_1}$ and $\lambda_{O_2}$ model the two logic states, initialized at $0$. The initial condition $x_0$ is such that $x_0(\lambda_z)=1$, $x_0(\lambda_{c_1})=50$ and $x_0(\lambda_{c_2})=5000$. Then, we add the following reaction $Com \rightarrow^{k_2} \lambda_{c_1}$. It is easy to show that if $k_2 \gg k_1$ then we have the desired probabilistic behaviour for any initial value of $Com \in \mathbb{N}$. 
This may be of interest also for practical scenarios in synthetic biology, where for instance the behaviour of synthetic bacteria needs to be externally controlled \cite{anderson2006environmentally}; and, if each bacteria is endowed with a similar logic, then, by tuning $Com$, at the population level, it is possible to control the fraction of bacteria that perform this task.
\end{remark}
In the next theorem we generalize to the multidimensional case.
\begin{theorem}\label{th:Multi}
Given $f: \mathbb{N}  ^{m}  \rightarrow [0, 1] $ with $m\geq 1$ such that $\sum_{i\in \mathbb{N}^m}f(i)=1$, then there exists a CRS $C=(\Lambda,R,x_0)$ such that the joint limit distribution of $(\lambda_{out_1},$ $\lambda_{out_2},...,\lambda_{out_m}) \in \Lambda$ approximates $f$ with arbitrarily small error under the $L^1$ distance. The approximation is exact if the support of $f$ is finite.
\end{theorem}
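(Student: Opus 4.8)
The plan is to lift the single-dimensional construction of Definition~\ref{CRN univ} essentially verbatim, keeping the selection mechanism untouched and duplicating only the ``loading'' stage once per coordinate. Write the finite support as $J=(z_1,\dots,z_{|J|})\subseteq\mathbb{N}^m$, where each $z_i=(z_{i,1},\dots,z_{i,m})$. As before I would use a single seed species $\lambda_z$ with $x_0(\lambda_z)=1$, an indicator species $\lambda_{i,i}$ with $x_0(\lambda_{i,i})=0$ for every $i$, and the selection reactions $\tau_{i,1}:\lambda_z\rightarrow^{f(z_i)}\lambda_{i,i}$. The only new ingredient is that, for each support point $z_i$ and each coordinate $j\in[1,m]$, I introduce a species $\lambda_{i,j}$ with $x_0(\lambda_{i,j})=z_{i,j}$ together with a loading reaction $\tau_{i,2,j}:\lambda_{i,j}+\lambda_{i,i}\rightarrow\lambda_{out_j}+\lambda_{i,i}$, where $x_0(\lambda_{out_j})=0$.

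First I would treat the finite-support case and prove exactness. Since $J$ is finite the reachable state space from $x_0$ is finite, so the limit distribution of the induced CTMC exists, exactly as in Theorem~\ref{th:SingleCase}. In the initial state only the reactions $\tau_{i,1}$ are enabled, competing for the unique molecule of $\lambda_z$, and the first firing selects index $i$ with probability $\alpha_{\tau_{i,1}}(x_0)/\sum_j\alpha_{\tau_{j,1}}(x_0)=f(z_i)/\sum_j f(z_j)=f(z_i)$. Once $\lambda_{i,i}$ has appeared, no other indicator $\lambda_{j,j}$ with $j\neq i$ can ever be produced, so the only reactions that may subsequently fire are $\tau_{i,2,1},\dots,\tau_{i,2,m}$.

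The point I expect to require the most care is that this loading stage remains well-behaved despite the interleaving introduced by the $m$ coordinates: in the one-dimensional proof a single reaction simply fires repeatedly, whereas here $m$ reactions may fire in any order. The resolution is that the $\tau_{i,2,j}$ all share the catalyst $\lambda_{i,i}$, which is never consumed, and they act on pairwise disjoint reactant pools $\lambda_{i,1},\dots,\lambda_{i,m}$; hence they commute, and regardless of the firing order each pool $\lambda_{i,j}$ is depleted to $0$ while producing exactly $z_{i,j}$ molecules of $\lambda_{out_j}$. The resulting configuration is absorbing, so conditioned on selecting $i$ the limit distribution is the point mass at $(\lambda_{out_1},\dots,\lambda_{out_m})=z_i$. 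Summing over $i$, the joint limit distribution equals $f$ exactly, which settles the finite-support claim.

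Finally, for countably infinite support I would reduce to the finite case exactly as in Corollary~\ref{univer}: since $\sum_{n\in\mathbb{N}^m}f(n)=1$ is absolutely convergent, for any $\epsilon>0$ I can pick a finite $J'\subseteq J$ carrying all but at most $\epsilon$ of the mass, apply the construction above to the restriction of $f$ to $J'$, and bound the $L^1$ distance between the realised distribution and $f$ by $\epsilon$. Everything beyond the confluence argument of the loading stage transfers coordinatewise from the one-dimensional proof, so I expect that argument to be the only genuinely new verification.
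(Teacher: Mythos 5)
Your proposal is correct and follows essentially the same route as the paper: the paper's (sketched) proof likewise reuses the one-dimensional selection mechanism of Definition~\ref{CRN univ} and adds, per support point and per coordinate, an auxiliary species loaded into $\lambda_{out_j}$ via a reaction catalysed by the selected indicator species, exactly as in your $\tau_{i,2,j}$ (compare Example~\ref{ex-2} in the appendix). Your explicit commutation argument for the loading stage and the reduction to Corollary~\ref{univer} for infinite support fill in details the paper leaves implicit, but they do not constitute a different approach.
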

To prove this theorem we can derive a CRS similar to that in the uni-dimensional case. The firing of the first reaction can be used to probabilistically determine the value at steady state of the $m$ output species, using some auxiliary species.

{ 
\begin{example}\label{ex-2}
Consider the following probability mass function 
\[ f(y_1,y_2)= \left\{ 
  \begin{array}{l l}
    \frac{1}{6}, \,\,\, \, \, \,  \text{if $y_1=3$ and $y_2=1$}\\
    \frac{1}{3}, \,\,\,\, \, \,  \text{if $y_1=3$ and $y_2=2$}\\
    \frac{1}{2}, \, \, \,  \,\,\, \text{if $y_1=1$ and $y_2=5$}\\
    0, \, \, \, \,\,\, \text{otherwise}\\
  \end{array} \right.
 \]
we present  the CRS $C=(\Lambda,R,x_0)$ that according to its stochastic semantics, for $\lambda_{out_1}, \lambda_{out_2} \in \Lambda$ yields the steady-state distribution $\pi_{\lambda_{out_1},\lambda_{out_2}}$, joint limit distribution of $\lambda_{out_1},\lambda_{out_2}$, exactly equal to $f$.
Let $\Lambda=$ $\{ \lambda_z, \lambda_a, \lambda_b,$ $ \lambda_c, \lambda_{1,1}, \lambda_{1,2}  \lambda_{2,1}, \lambda_{2,2},   \lambda_{3,1}, $ $\lambda_{3,2} \lambda_{out_1},\lambda_{out_2} \}$ and $R$ given by the following set of reactions:
\begin{align*}
&\tau_1:  \lambda_z \rightarrow^{\frac{1}{6}}    \lambda_{a};\quad
\tau_2:  \lambda_z \rightarrow^{\frac{1}{3}}   \lambda_{b};\quad
\tau_3: \lambda_z \rightarrow^{\frac{1}{2}}   \lambda_{c};\\
&\tau_4: \lambda_{1,1} + \lambda_{a} \rightarrow^{1} \lambda_{a} + \lambda_{out_1};\\
&\tau_5: \lambda_{1,2} + \lambda_{a} \rightarrow^{1} \lambda_{a} + \lambda_{out_2};\\
&\tau_6: \lambda_{2,1} + \lambda_{b} \rightarrow^{1} \lambda_{b} + \lambda_{out_1};\\
&\tau_7: \lambda_{2,2} + \lambda_{b} \rightarrow^{1} \lambda_{b} + \lambda_{out_2};\\
&\tau_8: \lambda_{3,1} + \lambda_{c} \rightarrow^{1} \lambda_{c} + \lambda_{out_1};\\&\tau_9: \lambda_{3,2} + \lambda_{c} \rightarrow^{1} \lambda_{c} + \lambda_{out_2};
\end{align*}
The initial condition $x_0$ is such that:
$$  x_0(\lambda_z)=1;$$
$$ x_0(\lambda_{1,1})=3;\,  x_0(\lambda_{1,2})=1;\,  x_0(\lambda_{2,1})=3; $$
$$ x_0(\lambda_{2,2})=2;  \, x_0(\lambda_{3,1})=1; \, x_0(\lambda_{3,2})=5;$$
and all other species mapped to zero.
The set of reachable states from $x_0$ is finite so the limit distribution exists. The firing of the first reaction uniquely determines the steady state solution. $x_0(\lambda_{i,1})$ and $x_0(\lambda_{i,2})$ for $i \in [1,3]$ are exactly the value of $\lambda_{out_1}$ and $\lambda_{out_2}$ at steady state if the first reaction to fire is $\tau_i$; this happens with probability $f(x_0(\lambda_{i,1}),x_0(\lambda_{i,2}))$. Therefore, we have that, at steady state, the joint distribution of $\lambda_{out_1}$ and $\lambda_{out_2}$  equals $f$. 

\end{example}


}

\subsection{Special distributions}\label{SpecSect}
For a given pmf the number of reactions of the CRS derived from Definition \ref{CRN univ} is linear in the dimension of its support. As a consequence, if the support is large then the CRSs derived using Theorems \ref{th:SingleCase} and \ref{th:Multi} can be unwieldy. In the following we show three optimised CRSs to calculate the Poisson, binomial and uniform distributions. These CRNs are compact and applicable in many practical scenarios. However, using  Definition \ref{CRN univ} the output is always produced monotonically. In the circuits below this does not happen, but, on the other hand, the gain in compactness is substantial.
The first two circuits have been derived from the literature, while the CRN for the uniform distribution is new.
\subsubsection{Poisson distribution}
The main result of \cite{anderson2010product}  guarantees that all the CRNs that respect some conditions (weakly reversible, deficiency zero and irreducible state space, see \cite{anderson2010product}) have a distribution given by the product of Poisson distributions.
As a particular case, we consider the following CRS composed of only one species $\lambda$ and the following two reactions
$  \tau_1 : \emptyset \rightarrow^{k_1} \lambda; \, \tau_2:   \lambda \rightarrow^{k_2} \emptyset.$
Then, at steady state, $\lambda$ has a Poisson distribution with expected value $\frac{k_1}{k_2}$.
\subsubsection{Binomial distribution}
We consider the network introduced in \cite{anderson2010product}. The CRS is composed of two species, $\lambda_1$ and $\lambda_2$, with initial condition $x_0$ such that $x_0(\lambda_1)+x_0(\lambda_2)=K$ and the following set of reactions:
$
\tau_1:\lambda_1 \rightarrow^{k_1} \lambda_2; 
\tau_2:\lambda_2 \rightarrow^{k_2} \lambda_1.$
As shown in \cite{anderson2010product}, $\lambda_1$ and $\lambda_2$ at steady state have a binomial distribution such that:
$
\pi_{\lambda_1}(y)=(\frac{K}{y}){c_1}^{y}(1-c_1)^{K-y}
 \text{   and    }
\pi_{\lambda_2}(y)=(\frac{K}{y}){c_2}^{y}(1-c_2)^{K-y} .
$ 
\noindent
\subsubsection{Uniform distribution}
The following CRS computes the uniform distribution over the sum of the initial number of molecules in the system, independently of the initial value of each species. It has species $\lambda_1$ and $\lambda_2$ and reactions:
\begin{align*}
    &\tau_1:\lambda_{1} \rightarrow^{k}   \lambda_{2}; \quad
 \tau_2:\lambda_{2} \rightarrow^{k} \lambda_{1};\\
&\tau_3:\lambda_{1} + \lambda_{2} \rightarrow^{k} \lambda_{1} + \lambda_{1};\quad
 \tau_4:\lambda_{1} + \lambda_{2} \rightarrow^{k} \lambda_{2} + \lambda_{2}
 \end{align*}
For $k >0$,
$\tau_1$ and $\tau_2$ implement the binomial distribution. These are combined with $\tau_3$ and $\tau_4$, which implement a Direct Competition (DC) system \cite{cardelli2012cell}. DC has a bimodal limit distribution in $0$ and in $K$, where $x_0(\lambda_1)+x_0(\lambda_2)=K$, with $x_0$ initial condition.
This network, surprisingly, according to the next theorem, at steady state produces a distribution which varies uniformly between $0$ and $K$.
\begin{theorem}
Let $x_0(\lambda_1)+x_0(\lambda_2)=K \in \mathbb{N}$. Then,  the CRS described above has the following steady state distribution for $\lambda_1$ and $\lambda_2$:
\[ \pi_{\lambda_1}(y)=\pi_{\lambda_2}(y)=\left\{ 
  \begin{array}{l l}
    \frac{1}{K+1}, \,\,\, \, \, \,  \text{if $y \in [0,K]$}\\
    0, \, \, \, \,\,\, \text{otherwise}\\
  \end{array} \right. .
 \]
\end{theorem}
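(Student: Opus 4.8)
The plan is to exploit the conservation law $x(\lambda_1)+x(\lambda_2)=K$, which holds because every one of the four reactions preserves the total molecule count. This collapses the two-dimensional dynamics onto the one-dimensional lattice $\{0,1,\ldots,K\}$: a state is completely determined by $n:=x(\lambda_1)$, with $x(\lambda_2)=K-n$. The induced CTMC on $\{0,\ldots,K\}$ is then a \emph{birth--death} chain, since each reaction changes $n$ by exactly $\pm 1$.

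First I would compute the birth and death rates by summing the mass-action propensities of the reactions that raise or lower $n$. The reactions that increase $n$ are $\tau_2$ and $\tau_3$, giving a birth rate $b_n = k(K-n)+kn(K-n)=k(K-n)(n+1)$; the reactions that decrease $n$ are $\tau_1$ and $\tau_4$, giving a death rate $d_n = kn+kn(K-n)=kn(K-n+1)$. I would also record that the chain is not trapped at the boundaries: at $n=0$ only $\tau_2$ fires and at $n=K$ only $\tau_1$ fires, so the chain is irreducible on $\{0,\ldots,K\}$ and, the state space being finite, admits a unique stationary distribution $\pi$.

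Next, since every birth--death chain is reversible, I would invoke detailed balance, $\pi(n)\,b_n = \pi(n+1)\,d_{n+1}$, to obtain the ratio
\[
\frac{\pi(n+1)}{\pi(n)}=\frac{b_n}{d_{n+1}}=\frac{k(K-n)(n+1)}{k(n+1)(K-n)}=1.
\]
The decisive observation --- and the reason the statement is ``surprising'' --- is this exact cancellation: the linear reactions $\tau_1,\tau_2$ alone would produce the binomial ratio $(K-n)/(n+1)$, while the quadratic competition reactions $\tau_3,\tau_4$ contribute precisely the reciprocal factor, so the tendency to concentrate near $K/2$ and the bimodal attraction toward $0$ and $K$ neutralize each other at every step. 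Because $\pi(n+1)=\pi(n)$ for all $n\in\{0,\ldots,K-1\}$, the stationary distribution is constant on its $K+1$ points, and normalization forces $\pi(n)=\tfrac{1}{K+1}$. Since $x(\lambda_2)=K-n$ is a bijection of the same index set, $\lambda_2$ inherits the identical uniform law, which is exactly the claimed result.

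The only genuine obstacle is the rate bookkeeping: one must correctly attribute each reaction to a birth or a death and evaluate $d_{n+1}$ rather than $d_n$ in the detailed-balance equation, since an off-by-one there destroys the cancellation. Everything else --- existence and uniqueness of $\pi$, and the final normalization --- is routine once irreducibility on the finite chain has been noted.
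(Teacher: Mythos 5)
Your proof is correct, and it takes a genuinely different route from the paper's. The paper works directly with the global balance equations $\pi Q = 0$: it writes out the balance equation at the boundary state $(K,0)$, at a generic interior state $(K-j,j)$, and at $(0,K)$, and then \emph{verifies} that the constant vector $\pi^{j-1}=\pi^j=\pi^{j+1}$ satisfies all of them, concluding by uniqueness (finite irreducible chain). You instead observe that the conserved quantity collapses the dynamics onto a one-dimensional birth--death chain and invoke reversibility of such chains, so that detailed balance $\pi(n)b_n=\pi(n+1)d_{n+1}$ \emph{derives} the ratio $\pi(n+1)/\pi(n)=1$ rather than checking a guessed solution; your rate bookkeeping ($b_n=k(K-n)(n+1)$, $d_n=kn(K-n+1)$, hence $d_{n+1}=k(n+1)(K-n)$) is exactly right, as are the boundary/irreducibility observations. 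What your approach buys is economy and explanatory power: the single cancellation $\frac{(K-n)(n+1)}{(n+1)(K-n)}=1$ exposes precisely how the binomial-type ratio from the linear reactions is neutralized by the reciprocal factor from the competition reactions, with no need to know the answer in advance. What the paper's approach buys is self-containedness: it needs nothing beyond the definition of a stationary distribution and the uniqueness theorem it already cites, whereas you additionally rely on the (standard, cut-argument) fact that finite irreducible birth--death chains satisfy detailed balance. Both arguments rest on the same existence/uniqueness foundation, so the two proofs are interchangeable in the paper.
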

\begin{proof}
We consider a general initial condition $x_0$ such that $x_0(\lambda_1)=K-M$ and $x_0(\lambda_2)=M$ for  $0 \leq M \leq K$ and $K,M \in \mathbb{N}$. Because any reaction has exactly $2$ reagents and $2$ products, we have the invariant that for any configuration $x$ reachable from $x_0$ it holds that $x(\lambda_1)+x(\lambda_2)=K$. Figure \ref{fig:CTMC} plots 
the CTMC semantics of the system.
\begin{figure*}
	\centering
	\includegraphics[scale=0.25]{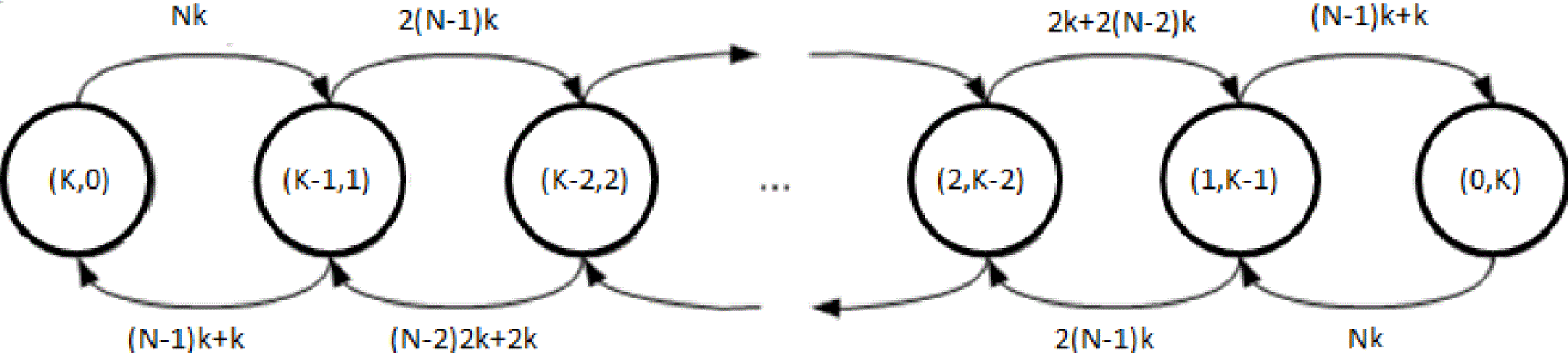} 
	\caption{The figure shows the CTMC induced by the CRS implementing the uniform distribution for initial condition $x_0$ such that $x_0(\lambda_1)+x_0(\lambda_2)=K$.  }
	\label{fig:CTMC}
\end{figure*}%
For any fixed $K$ the set of reachable states from any initial condition in the induced CTMC is finite (exactly $K$ states are reachable from any initial condition) and irreducible. Therefore, the steady state solution exists, is unique and independent of the initial conditions. To find this limit distribution we can calculate $Q$, the infinitesimal generator of the CTMC, and then solve the linear equations system $\pi Q=0$, with the constraint that $\sum_{i \in [0,K]}\pi_{i}=1$, where $\pi_{i}$ is the $i$th component of the vector $\pi$, as shown in \cite{kwiatkowska2007stochastic}. Because the CTMC we are considering is irreducible, this is equivalent to solving the balance equations with the same constraint. The resulting $\pi$ is the steady state distribution of the system. 

We consider $3$ cases, where $(K-j,j)$ for $j \in [0,K]$ represents the state of the system in terms of molecules of $\lambda_1$ and $\lambda_2$.
\begin{itemize}

\item{Case $j=0$.} For the state $(K,0)$, whose limit distribution is defined as $\pi (K,0),$ we have the following balance equation:
\[
-\pi(K,0) Kk+\pi(K-1,1)[(K-1)k+k]=0  \implies
\]
\[
\pi(K,0)=\pi(K-1,1).
\]
\item{Case $j \in [1,K-1]$.}
In Figure \ref{fig:CTMC} we see that the states and the rates follow a precise pattern: every state is directly connected with only two states and for any transition the rates depend on two reactions, therefore we can consider the balance equations for a general state $(K-j,j)$ for $j\in [1,K-1]$ (for the sake of a lighter notation instead of $\pi(K-j,j)$ we write $\pi^{j}$):
\begin{align*}
\pi^{j-1}&[ K +1 -j+(K+1-j)(j-1)]- \\
&
\pi^{j}[2(K-j)j+j+K-j]+\\
&\pi^{j+1}[j+1+(K-j-1)(j+1)] =0 \\
&\quad\quad\quad\quad\quad\quad\quad \implies\\
\pi^{j-1}&[ Kj-j^2+j]-\\
&\pi^{j}[2Kj-2j^2 + K]+\\
&\pi^{j+1}[Kj+K-j^2-j] = 0
\end{align*}
It is easy to verify that if $\pi^{j-1}= \pi^j = \pi^{j+1}$ then the equation is proved. 
\item{Case $j=K$.} The case for the state $(0,K)$ is similar to the case $(K,0)$. 
\end{itemize}
We have shown that each reachable state has equal probability at steady state for any possible initial condition. Therefore, because $\sum_{i=0}^K \pi^i = 1$ and $\pi_{\lambda_i}(y)=$ $\sum_{ x_{j} \in S|x_j(\lambda_i)=y} \pi^j $ for $y \geq 0$,  we have that for both $\lambda_1$ and $\lambda_2$  \[ \pi_{\lambda_1}(y)=\pi_{\lambda_2}(y)=\left\{ 
  \begin{array}{l l}
    \frac{1}{K+1}, \,\,\, \, \, \,  \text{if $y \in [0,K]$}\\
    0, \, \, \, \,\,\, \text{otherwise}\\
  \end{array} \right.
 \]
\hfill $\square$
\end{proof}

\section{Calculus of limit distributions of CRNs}

In the previous section we have shown that CRNs are able to program any pmf on $\mathbb{N}$. We now define a calculus to compose and compute on pmfs.  We show it is complete with respect to finite support pmfs on $\mathbb{N}$. The calculus we present is a left-invariant baricentric algebra \cite{mardare2016quantitative}. Then, we define a translation of this calculus into a restricted class of CRNs. We prove the soundness of such a translation, which thus yields an abstract calculus of limit distributions of CRNs. For simplicity, in what follows we consider only pmfs with support in $\mathbb{N}$, but the results can be generalised to the multi-dimensional case. 
\begin{mydef}{(Syntax).}\label{LangSynt} The syntax of formulae of our calculus is given by
\[P:=\,(P+P) \, | \, min(P,P) \, | \, k \cdot P \, | \, (P)_D:P \, | \, one \, | \, zero
\]
\[D := p \,|\, p \cdot c_i  + D \, \]
where $k \in \mathbb{Q}_{\geq 0}$, $p \in \mathbb{Q}_{[0,1]}$ are rational and $V=\{c_1,...,$ $c_n \}$ is a set of variables with values in $\mathbb{N}$.
\end{mydef}
A formula $P$ denotes a pmf that can be obtained as a sum, minimum, multiplication by a rational, or convex combination of pmfs $one$ and $zero$.
Given a formula $P$, 
variables $V=\{c_1,...,c_n \}$, called \emph{environmental inputs}, model the influence of external factors on the probability distributions of the system. $V(P)$ represents the variables in $P$. An \emph{environment} $E: V \rightarrow \mathbb{Q}_{[0,1]} $ is a partial function which maps each input $c_i$ to its valuation normalized to $[0,1]$.
Given a formula $P$ and an environment $E$, where $V(P)\subseteq dom(E)$, with $dom(E)$ domain of $E$, we define its semantics, $[\![P]\!]_E$, 
as a pmf 
(the empty environment is denoted as $\emptyset$). $D$ expresses a summation of valuations of inputs $c_i$ weighted by rational probabilities $p$, which evaluates to a rational $[\![D]\!]_E$ for a given environment. We require that, for any $D$, the sum of $p$ coefficients in $D$ is in $[0,1]$. This ensures that $0\leq [\![D]\!]_E \leq 1$. The semantics is defined inductively as follows, where the operations on pmfs are defined in Section \ref{op}. 
\begin{mydef}{(Semantics).} Given formulae $P,$ $P_1,$ $P_2$ and an environment $E$, such that $V(P)\cup V(P_1)\cup V(P_2)$ $\subseteq dom(E)$, we define
\begin{align*}
   &[\![one]\!]_E=\pi_{one}\quad \quad \quad[\![ zero]\!]_E=\pi_{zero}\\
   & [\![P_1+P_2]\!]_E=[\![P_1]\!]_E + [\![P_2]\!]_E \\
   & [\! [min(P_1,P_2)]\!]_E=min([\![P_1]\!]_E,[\![P_2]\!]_E) \\
   & [\![k\cdot P]\!]_E=\frac{k_1\cdot ([\![P]\!]_E)}{k_2}\,\,\,\, \text{for $k=\frac{k_1}{k_2}$ and $k_1,k_2 \in \mathbb{N}$}\\
   & [\![(P_1)_D:(P_2)]\!]_E=([\![P_1]\!]_E)_{[\![D]\!]_E} : ([\![P_2]\!]_E) \\
   & [\![p]\!]_E=p \\
   & [\![p \cdot c_i  + D]\!]_E=p \cdot E(c_i)  + ([\![D]\!]_E)
\end{align*} 
where $$\pi_{one}(y)=\left\{ 
 \begin{array}{l l}
    1, \,   \text{if $y=1$ }\\
    0, \,  \text{otherwise}\\
  \end{array} \right. \text{, } \pi_{zero}(y)=\left\{ 
 \begin{array}{l l}
    1, \,   \text{if $y=0$ }\\
    0, \,  \text{otherwise}\\
  \end{array} \right..$$
\end{mydef}
To illustrate the calculus, consider the Bernoulli distribution with parameter $p\in \mathbb{Q}_{[0,1]}$. We have $bern^p=(one)_p:zero$, where $[\![bern^p]\!]_{\emptyset}(y)=\{p \,\,\,   \text{if $y=1$};1-p \,\,\,  \text{if $y=0$};0 \,\,\, \text{otherwise}\}$.
The binomial distribution can be obtained as a sum of $n$ independent Bernoulli distributions of the same parameter. Given a random variable with a binomial distribution with parameters $(n,p)$, if $n$ is sufficiently large and $p$ sufficiently small then this approximates a Poisson distribution with parameter $n\cdot p$.
\subsection{Operations on distributions}\label{op}
In this section, we define a set of operations on pmfs needed to define the semantics of the calculus. We conclude the section by showing that these operations are sufficient to represent pmfs with finite support in $\mathbb{N}$.

\begin{mydef}\label{Op-defn}
  Let $\pi_1:\mathbb{N} \rightarrow [0,1]$,  $\pi_2:\mathbb{N} \rightarrow [0,1] $ be two pmfs. Assume $p \in \mathbb{Q}_{[0,1]}$, $y \in \mathbb{N}$, $k_1 \in \mathbb{N}$ and $k_2 \in \mathbb{N}_{>0}$, then  we define the following operations on pmfs:
\begin{itemize}
\item The sum or convolution of $\pi_1$ and $\pi_2$ is defined as 
$$(\pi_1 + \pi_2)(y)=\sum_{ (y_i,y_j) \in {\mathbb{N}\times \mathbb{N}} \, s.t. \, y_i+y_j=y} \pi_1(y_i)\pi_2(y_j).
 $$
 \item The minimum of $\pi_1$ and $\pi_2$ is defined as
\begin{align*}
min(\pi_1,&\pi_2)(y)=\\ 
&\sum_{(y_i,y_j) \in \mathbb{N}\times \mathbb{N}\,s.t.\,min(y_i,y_j)=y} \pi_{1}(y_i)\pi_{2}(y_j).
 \end{align*}
\item  The multiplication of $\pi_1$ by the constant $k_1$ is defined as
$$
   (k_1 \pi_1) (y)= \left\{ 
  \begin{array}{l l}
     \pi_1(\frac{y}{k_1}), \,\,\, \, \, \,  \text{if $\frac{y}{k_1} \in \mathbb{N}$ }\\
    0, \, \, \, \,\,\, \text{otherwise}\\
  \end{array} \right.
    $$
\item  The division of $\pi_1$ by the constant $k_2$ is defined as 
    $$\frac{\pi}{k_2}(y)=\sum_{y_i \in \mathbb{N}\,s.t.\, y=\lfloor y_i / k_2 \rfloor}\pi(y_i).$$
\item  The convex combination of $\pi_1$ and $\pi_2$, for $y \in \mathbb{N}$, is defined as
$$( (\pi_1)_p : (\pi_2) )(y)= p\pi_1 (y)+ (1-p) \pi_2 (y)$$.

\end{itemize}

\end{mydef}

{               
\begin{example}\label{ex-COm}
Consider the following pmf $\pi_1: \mathbb{N}\rightarrow [0,1]$
\[ \pi_{1}(y_1)= \left\{ 
  \begin{array}{l l}
    \frac{1}{6}, \,\,\, \, \, \,  \text{if $y_1=3$ }\\
    \frac{5}{6}, \, \, \,  \,\,\, \text{if $y_1=0$ }\\
    0, \, \, \, \,\,\, \text{otherwise}\\
  \end{array} \right.
 \]
and the following pmf $\pi_2: \mathbb{N}\rightarrow [0,1]$
\[ \pi_{2}(y_2)= \left\{ 
  \begin{array}{l l}
    \frac{1}{2}, \,\,\, \, \, \,  \text{if $y_2=5$ }\\
    \frac{1}{2}, \, \, \,  \,\,\, \text{if $y_2=1$ }\\
    0, \, \, \, \,\,\, \text{otherwise}\\
  \end{array} \right.
 \]
Then the sum of $\pi_1$ and $\pi_2$ is: 
\[ (\pi_1 + \pi_2)(y)= \left\{ 
  \begin{array}{l l}
    \frac{1}{12}, \,\,\, \, \, \,  \text{if $y=8$ }\\
    \frac{5}{12}, \,\,\, \, \, \,  \text{if $y=5$ }\\
    \frac{1}{12}, \,\,\, \, \, \,  \text{if $y=4$ }\\
    \frac{5}{12}, \, \, \,  \,\,\, \text{if $y=1$ }\\
    0, \, \, \, \,\,\, \text{otherwise}\\
  \end{array} \right.
 \]
\end{example}
\begin{example}\label{ex-MinDef}
Consider the pmfs $\pi_1$ and $\pi_2$ of Example \ref{ex-COm} then  \[ min(\pi_1,\pi_2)(y)= \left\{ 
  \begin{array}{l l}
    \frac{1}{12}, \,\,\, \, \, \,  \text{if $y=3$ }\\
    \frac{1}{12}, \,\,\, \, \, \,  \text{if $y=1$ }\\
    \frac{5}{6}, \,\,\, \, \, \,  \text{if $y=0$ }\\
    0, \, \, \, \,\,\, \text{otherwise}\\
  \end{array} \right.
 \]
\end{example}

\begin{example}\label{ex-Sub}
Consider the pmf $\pi_2$ of Example \ref{ex-COm}, then  \[ 2\pi_{2}(y)= \left\{ 
  \begin{array}{l l}
    \frac{1}{2}, \,\,\, \, \, \,  \text{if $y=10$ }\\
    \frac{1}{2}, \,\,\, \, \, \,  \text{if $y=2$ }\\
    0, \, \, \, \,\,\, \text{otherwise}\\
  \end{array} \right.
 \]
\end{example}

}
\begin{example}\label{pmf}
Consider the following formula 
\[P_1=(one)_{0.001\cdot c + 0.2}:(4\cdot one)+(2 \cdot one)_{0.4}:(3\cdot one),\] with set of environmental variables $V=\{c \}$ and an enviroment $E$ such that $V(P_1)\subseteq dom(E)$.
Then, according to Definition \ref{Op-defn} we have that 
\[[\![P_1]\!]_E(y)=\left\{ 
  \begin{array}{l l}
    (0.001\cdot [\![c]\!]_E+0.2)\cdot 0.4, \,\,\, \, \, \,  \text{if $y=3$}\\
    (0.001\cdot [\![c]\!]_E+0.2)\cdot 0.6, \,\,\, \, \, \,  \text{if $y=4$}\\
    (1-(0.001\cdot [\![c]\!]_E+0.2))\cdot 0.4, \,\,\, \, \, \,  \text{if $y=6$}\\
    (1-(0.001\cdot [\![c]\!]_E+0.2))\cdot 0.6, \,\,\, \, \, \,  \text{if $y=7$}\\
    0, \, \, \, \,\,\, \text{otherwise}\\
  \end{array} \right. \]
\end{example}

The convex combination operator 
is the only one that is not closed with respect to pmfs whose support is a single point.  
{ 
Lemma \ref{conv} shows the associativity of the convex distribution.
\begin{lemma}\label{conv}
Given probability mass functions $\pi_1$, $\pi_2 : \mathbb{N} \rightarrow [0,1]$, $p_1,p_2,p_3,p_4 \in [0,1]$ and $k \in \mathbb{Q}_{\geq 0}$, then the following equations hold:
\begin{itemize}
\item $k(({\pi_1})_p : \pi_2)=(k{\pi_1})_p : (k\pi_2)$
\item $(({\pi_1})_{p_1} : \pi_2)_{p_2}: \pi_3 = ({\pi_1})_{p_3} :( ({\pi_2})_{p_4}:\pi_3)$ iff $p_3=p_1 p_2$ and $ p_4=\frac{(1-p_1)p_2}{1-p_1 p_2}$
\item $({\pi_1})_p : \pi_2=({\pi_2})_{1-p} : \pi_1$
\item $({\pi_1})_p : \pi_1=\pi_1$.
\end{itemize}
\end{lemma}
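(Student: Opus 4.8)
The plan is to verify each of the four identities directly from the operational definitions in Definition~\ref{Op-defn}, treating them as pointwise equalities of pmfs, i.e.\ checking that the left- and right-hand sides agree when evaluated at an arbitrary $y \in \mathbb{N}$. Since the convex combination $(\pi_1)_p : \pi_2$ is defined pointwise as $p\pi_1(y) + (1-p)\pi_2(y)$, three of the four claims reduce to elementary algebra on these affine expressions, and I would dispatch them in increasing order of effort.

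First I would handle the two easy closure/commutativity facts. For $({\pi_1})_p : \pi_1 = \pi_1$, evaluating at $y$ gives $p\pi_1(y) + (1-p)\pi_1(y) = \pi_1(y)$ immediately. For the symmetry law $({\pi_1})_p : \pi_2 = ({\pi_2})_{1-p} : \pi_1$, both sides evaluate to $p\pi_1(y) + (1-p)\pi_2(y)$ once one expands the right-hand side as $(1-p)\pi_2(y) + (1-(1-p))\pi_1(y)$. For the scalar-distribution law $k(({\pi_1})_p : \pi_2) = (k{\pi_1})_p : (k\pi_2)$, I would use the definition of multiplication by $k = k_1/k_2$ (via the combination of the $k_1\cdot(\cdot)$ and $\frac{(\cdot)}{k_2}$ operations); the point is that both scaling by a constant and division by a constant act by relabelling the support through $y \mapsto ky$ (respectively $y\mapsto \lfloor y/k_2\rfloor$) and leave the probability weights untouched, while convex combination acts only on the weights at a fixed support point. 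Since the support relabelling and the affine mixing of weights commute, applying $k(\cdot)$ to $p\pi_1 + (1-p)\pi_2$ yields $p(k\pi_1) + (1-p)(k\pi_2)$, which is exactly the right-hand side.

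The main work is the associativity/reassociation identity, the second bullet. Here I would fully expand the nested convex combinations on both sides as affine combinations of $\pi_1,\pi_2,\pi_3$. The left-hand side $((\pi_1)_{p_1}:\pi_2)_{p_2}:\pi_3$ evaluates to
\begin{align*}
p_2\big(p_1\pi_1 + (1-p_1)\pi_2\big) + (1-p_2)\pi_3
= p_1p_2\,\pi_1 + (1-p_1)p_2\,\pi_2 + (1-p_2)\,\pi_3,
\end{align*}
while the right-hand side $(\pi_1)_{p_3}:((\pi_2)_{p_4}:\pi_3)$ evaluates to
\begin{align*}
p_3\,\pi_1 + (1-p_3)\big(p_4\pi_2 + (1-p_4)\pi_3\big)
= p_3\,\pi_1 + (1-p_3)p_4\,\pi_2 + (1-p_3)(1-p_4)\,\pi_3.
\end{align*}
Matching the coefficient of $\pi_1$ forces $p_3 = p_1p_2$; matching the coefficient of $\pi_2$ gives $(1-p_3)p_4 = (1-p_1)p_2$, hence $p_4 = (1-p_1)p_2/(1-p_1p_2)$; and I would then check that the $\pi_3$ coefficients agree automatically, since the three weights on each side sum to one. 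This yields the stated ``iff''. The one subtlety I would address is the well-definedness of $p_4$: the formula requires $p_1p_2 \neq 1$, and one should note this degenerate case ($p_1=p_2=1$) separately, where the identity holds trivially for any $p_4$ since the $\pi_2,\pi_3$ components vanish. The principal obstacle is therefore not conceptual but bookkeeping: one must be careful that ``equal as pmfs'' means equal coefficients only when $\pi_1,\pi_2,\pi_3$ are affinely independent as functions, so the forward direction of the iff should be argued by choosing (or assuming) distributions witnessing independence, or simply by reading off the coefficient identities as the defining relations.
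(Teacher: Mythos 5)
Your proof is correct and follows essentially the same route as the paper's: pointwise expansion of the convex combinations, coefficient matching for the reassociation law, and the support-relabelling view of multiplication/division by a constant for the scalar law. The only differences are refinements the paper omits --- your remarks on the degenerate case $p_1p_2 = 1$ and on the affine independence of $\pi_1,\pi_2,\pi_3$ needed for the \emph{only if} direction --- which strengthen rather than change the argument.
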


\begin{proof}
We need to prove each statement.

\par

\noindent
\emph{Case $k(({\pi_1})_p : \pi_2)=(k{\pi_1})_p : (k\pi_2)$.}
\par
For $y \in \mathbb{N}$ we have that 
\begin{align*}
& k({(\pi_1)}_p : \pi_2)(y)= \\
&\sum_{y_i \in \mathbb{N} s.t. \lfloor ky_i \rfloor=y} (p\pi_1 (y_i)+(1-p)(\pi_2(y_i)))=\\
&\sum_{y_i \in \mathbb{N}\, s.t.\, \lfloor ky_i \rfloor=y} (p\pi_1 (y_i))+
\\&\quad \quad \quad \quad \quad \quad\quad \quad  \sum_{y_i \in \mathbb{N}\, s.t. \, \lfloor ky_i \rfloor=y}((1-p)(\pi_2(y_i)))=\\
&p\cdot \sum_{y_i\in \mathbb{N}\, s.t.\, \lfloor ky_i \rfloor=y} (\pi_1 (y_i))+(1-p)\cdot \\
&\quad \quad \quad \quad \quad \quad \quad \quad  \sum_{y_i \in \mathbb{N}\, s.t. \,\lfloor ky_i \rfloor=y}((\pi_2(y_i)))=\\
&(k{\pi_1})_p : (k\pi_2))(y)
\end{align*}

\par

\noindent
\emph{Case $(({\pi_1})_{p_1} : \pi_2)_{p_2}: \pi_3 = ({\pi_1})_{p_3} :( ({\pi_2})_{p_4}:\pi_3)$ iff $p_3=p_1 p_2$ and $ p_4=\frac{(1-p_1)p_2}{1-p_1 p_2}$.}
\par
For $y \in \mathbb{N}$ we have that 
\begin{align*}
(({\pi_1}_{p_1} &: \pi_2)_{p_2}:\pi_3)(y)=\\
&p_2(p_1 \pi_1(y)+(1-p_1)\pi_2(y))+(1-p_2)\pi_3(y)\\
({\pi_1}_{p_3} &:( {\pi_2}_{p_4}:\pi_3))(y)=\\
&p_3\pi_1(y)+ (1-p_3)(p_4 \pi_2(y)+(1-p_4)\pi_3(y))
\end{align*}
These are equal if 
\begin{align*}
& p_1 p_2=p_3\\  
&p_4-p_3 p_4=p_2-p_1p_2 \\
&1-p_2=(1-p_3)(1-p_4)
\end{align*}
and these conditions are satisfied if and only if $p_3=p_1 p_2$ and $ p_4=\frac{(1-p_1)p_2}{1-p_1 p_2}$.

\par

\noindent
\emph{Case $({\pi_1})_p : \pi_2=({\pi_2})_{1-p} : \pi_1$.}
\par
\noindent
For $y \in \mathbb{N} $ by definition \ref{Op-defn} it holds that \[ (({\pi_1})_p : \pi_2)(y)=p \pi_1(y) + (1-p) \pi_2 (y) = \] \[= (1-p)\pi_2 (y)+ p \pi_1 (y)= (({\pi_2})_{1-p} : \pi_1)(y)\]

\par

\noindent
\emph{Case $({\pi_1})_p : \pi_1=\pi_1$.}
\par
For $y \in \mathbb{N} $ by definition \ref{Op-defn} it holds that \begin{align*}
    &(({\pi_1})_p : \pi_1)(y)=p \pi_1(y) + (1-p) \pi_1 (y) =\\
    & (p+1-p)\pi_1 (y)=  \pi_1(y)
\end{align*}  
\hfill $\square$
\end{proof}
}

Having formally defined all the operations on pmfs, we can finally state the following proposition guaranteeing that the semantics of any formula of the  calculus is a pmf.
\begin{proposition}\label{AllIsAPmf}
Given $P$, a formula of the calculus defined in Definition \ref{LangSynt}, and an environment $E$ such that $V(P)\subseteq dom(E)$, then $[\![P]\!]_E$ is a pmf.
\end{proposition}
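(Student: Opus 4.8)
The plan is to prove the statement by structural induction on the formula $P$ generated by the grammar of Definition \ref{LangSynt}. Recall that $g:\mathbb{N}\rightarrow\mathbb{R}$ is a pmf exactly when $g(y)\geq 0$ for all $y$ and $\sum_{y\in\mathbb{N}}g(y)=1$, so both properties must be carried through the induction: at each step I would separately check non-negativity, which is immediate in every case since all operations combine non-negative quantities under non-negative coefficients, and normalization, which is where the work lies. The base cases $one$ and $zero$ are trivial, since $\pi_{one}$ and $\pi_{zero}$ each put unit mass on a single point; I take as induction hypothesis that $[\![P_1]\!]_E$ and $[\![P_2]\!]_E$ are pmfs and show that each syntactic constructor preserves the property via the corresponding operation of Definition \ref{Op-defn}.

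For the inductive step the uniform idea is that each operation pushes the product (or single) mass forward along a map on the index set, so that summing the result over all output values $y$ simply regroups the original masses. Concretely, for the sum I would note that the fibres $\{(y_i,y_j):y_i+y_j=y\}$ partition $\mathbb{N}\times\mathbb{N}$ as $y$ ranges over $\mathbb{N}$, whence $\sum_y(\pi_1+\pi_2)(y)=\bigl(\sum_{y_i}\pi_1(y_i)\bigr)\bigl(\sum_{y_j}\pi_2(y_j)\bigr)=1$; the minimum is identical, the partition now being by the fibres of $(y_i,y_j)\mapsto min(y_i,y_j)$. Multiplication by $k_1$ relabels along the injection $y'\mapsto k_1 y'$ and division by $k_2$ sums the fibres of $y_i\mapsto\lfloor y_i/k_2\rfloor$, so both preserve total mass, and multiplication by a rational $k=k_1/k_2$ preserves it as their composition. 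The convex combination $((\pi_1)_p:(\pi_2))(y)=p\,\pi_1(y)+(1-p)\pi_2(y)$ sums to $p\cdot 1+(1-p)\cdot 1=1$. Since every formula is ultimately built from $one$ and $zero$, all intermediate pmfs have finite support and each of these sums is finite, so no convergence question arises (and in the general case the regroupings are justified by non-negativity of the summands).

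The one step that requires genuine care, and which I expect to be the main obstacle, is the convex-combination constructor $(P_1)_D:(P_2)$, whose mixing weight $[\![D]\!]_E$ is not a fixed constant but the valuation of an environmental expression. The mixture computation above is valid as a convex combination only if $0\leq[\![D]\!]_E\leq 1$, so before invoking it I would establish this bound by a short auxiliary induction on the grammar of $D$: using $[\![p]\!]_E=p\in\mathbb{Q}_{[0,1]}$ and $[\![p\cdot c_i+D]\!]_E=p\cdot E(c_i)+[\![D]\!]_E$, together with the side condition that the $p$-coefficients occurring in any $D$ sum to a value in $[0,1]$ and that $E$ maps each $c_i$ into $[0,1]$, one obtains $[\![D]\!]_E\in[0,1]$. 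With this bound in hand the convex-combination case reduces to the computation already given, and the structural induction closes, proving that $[\![P]\!]_E$ is a pmf for every well-formed $P$ and every environment $E$ with $V(P)\subseteq dom(E)$.
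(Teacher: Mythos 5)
Your proposal is correct and follows essentially the same route as the paper: structural induction on $P$, with each operator case handled by regrouping the (product) mass over the fibres of the relevant map, and the convex-combination case reduced to a direct computation once $[\![D]\!]_E\in[0,1]$ is known. Your only addition is the short auxiliary induction on $D$ establishing that bound, which the paper instead takes directly from the stated side condition on the $p$-coefficients; this is a harmless (indeed slightly more careful) elaboration, not a different argument.
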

{ 
\begin{proof}
The proof is by structural induction on the structure of $P$ with basic cases $[\![one]\!]_E=\pi_{one}$ and $[\![zero]\!]_E=\pi_{zero}$, which are  pmfs by definition for any $E$.

\hfill $\square$
\end{proof}
}
The following theorem shows that our calculus is complete with respect to finite support distributions.
\begin{theorem}\label{finite-lang}
For any pmf $f:\mathbb{N}\rightarrow [0,1]$ with finite support there exists a formula $P$ such that $[\![P]\!]_\emptyset=f$.
\end{theorem}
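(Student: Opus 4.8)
The plan is to prove the statement by induction on the size $m=|J|$ of the support $J=\{z_1,\dots,z_m\}$ of $f$, realizing $f$ as a nested convex combination of point masses. Throughout I would assume the masses $f(z_i)$ are rational, which is in fact forced by the syntax of Definition \ref{LangSynt}: every probability occurring in a convex combination (and, under the empty environment, every $D$) is drawn from $\mathbb{Q}_{[0,1]}$, so $[\![P]\!]_\emptyset$ can only take rational values and exact equality $[\![P]\!]_\emptyset=f$ presupposes that $f$ is rational-valued.

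The first step is to exhibit, for every $n\in\mathbb{N}$, a formula whose semantics is the point mass at $n$. For $n=0$ this is $zero$ and for $n=1$ it is $one$. For $n\ge 2$ I claim $n\cdot one$ works: unfolding the multiplication-by-constant clause of Definition \ref{Op-defn}, $[\![n\cdot one]\!]_\emptyset(y)=\pi_{one}(y/n)$ when $y/n\in\mathbb{N}$ and $0$ otherwise, and since $\pi_{one}$ is supported only at $1$ this equals $1$ exactly when $y=n$. Hence every single-point pmf is expressible, which already settles the base case $m=1$.

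For the inductive step, suppose the claim holds for all finite-support pmfs of support size $m-1$, and let $f$ have support $\{z_1,\dots,z_m\}$ with $f(z_i)=p_i>0$ and $\sum_i p_i=1$. Since $m\ge 2$ we have $p_1<1$, so I may define the renormalized pmf $f'$ on $\{z_2,\dots,z_m\}$ by $f'(z_i)=p_i/(1-p_1)$; these weights are rational, lie in $[0,1]$, and sum to $1$, so by the induction hypothesis there is a formula $P'$ with $[\![P']\!]_\emptyset=f'$. Letting $P_1$ denote the point-mass formula for $z_1$ from the first step, I set $P=(P_1)_{p_1}:P'$. By the convex-combination clause of the semantics together with Definition \ref{Op-defn}, $[\![P]\!]_\emptyset=p_1\,[\![P_1]\!]_\emptyset+(1-p_1)\,f'$, and substituting $f'(z_i)=p_i/(1-p_1)$ recovers $f(z_i)=p_i$ on the whole support, completing the induction.

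I expect the only genuinely delicate point to be the rationality caveat above, which should be stated explicitly: the construction reproduces $f$ exactly precisely when its masses are rational, consistently with the probabilities the calculus admits. It is worth noting that the argument uses only $one$, $zero$, multiplication by a constant, and the convex-combination operator; the sum (convolution) and minimum operators of Definition \ref{Op-defn} are \emph{not} needed for completeness and serve instead to make the calculus compositional. One could alternatively invoke the identities of Lemma \ref{conv} (commutativity and reassociation of nested convex combinations) to present $f$ directly as a right-nested chain $(P_1)_{q_1}:((P_2)_{q_2}:(\cdots))$ with conditional weights $q_i=p_i/(p_i+\cdots+p_m)$, but the inductive formulation avoids any explicit manipulation of these weights.
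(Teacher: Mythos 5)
Your proposal is correct and takes essentially the same route as the paper: both express $f$ as a right-nested chain of convex combinations of the point masses $z_i\cdot one$, your induction on the support size being exactly the unfolded form of the paper's closed formula. Incidentally, your write-up improves on the printed proof in two small ways: your conditional weights $p_i/(1-p_1-\dots-p_{i-1})$ are the correct normalization, whereas the paper's denominator $\prod_{j=1}^{i-1}(1-f(z_j))$ should read $1-\sum_{j=1}^{i-1}f(z_j)$ (the two already disagree for three equal masses of $1/3$), and your rationality caveat is a genuine restriction that the theorem statement glosses over, since the calculus only admits probabilities in $\mathbb{Q}_{[0,1]}$ and hence only rational-valued semantics under the empty environment.
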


\begin{proof}
Given a pmf $f:\mathbb{N} \rightarrow [0,1]$ with finite support $J=(z_1,...,z_{|J|})$ we can define $P=(z_1\cdot one)_{f(z_1)}:((z_2\cdot one)_{\frac{f(z_2)}{1-f(z_1)}}:(...:((z_i\cdot one)_{\frac{f(z_i)}{\prod_{j=1}^{i-1}(1-f(z_j))}}:...:((z_n\cdot one )))))$.
Then, $[\![P]\!]_\emptyset=f$.
\hfill $\square$
\end{proof}
Proof of Theorem \ref{finite-lang} relies only on a subset of the operators, but the other operators are useful for composing previously defined pmfs.

\section{CRN implementation}\label{CRS imple}
We show how the operators of the calculus can be realized by operators on CRSs. The resulting CRSs produce the required distributions at steady state, that is, in terms of the steady state distribution of the induced CTMC. Thus, we need to consider a restricted class of CRNs that always stabilize and that can be incrementally composed. The key idea is that each such CRN has output species that cannot act as a reactant in any reaction, and hence the counts of those species increase monotonically.\footnote{
Note that this is a stricter requirement than those in \cite{chen2014deterministic}, where output species are produced monotonically, but they are allowed to act as catalysts in some reactions. We cannot allow that because catalyst species influence the value of the propensity rate of a reaction and so the probability that it fires.}
 This implies that the optimized CRSs shown in Section \ref{SpecSect} cannot be used compositionally.
\subsection{Non-reacting output CRSs (NRO-CRSs)}
Since in the calculus presented in Definition \ref{LangSynt} we consider only finite support pmfs, in this section we are limited to finite state CTMCs. This is important because some results valid for finite state CTMCs are not valid in infinite state spaces. 
Moreover, any pmf with infinite support on natural numbers can always be approximated under the $L^1$ norm (see Corollary \ref{univer}).

Given a CRS $C=(\Lambda,R,x_0)$, we call the \emph{non-reacting species} of $C$ the subset of species $\Lambda_r \subseteq \Lambda$ such that given $\lambda_r \in \Lambda_r$ there does not exist $\tau \in R$ such that $r^{\lambda_r}_{\tau}>0$, where $r^{\lambda_r}_{\tau}$ is the component of the source complex of the reaction $\tau$ relative to $\lambda_r$, that is, $\lambda_r$ is not a reactant in any reaction. Given $C$ we also define a subset of species, $\Lambda_o \subseteq \Lambda$, as the \emph{output species} of $C$. Output species are those whose limit distribution is of interest. In general, they may or may not be \emph{non-reacting species}; they depend on the observer and on what he/she is interested in observing. 
\begin{mydef}\label{stable CRN-defn}
A non-reacting output CRS (NRO-CRS) is a tuple $C=(\Lambda,\Lambda_o,R,$ $x_0)$, where $\Lambda_o \subseteq \Lambda$ are the output species of $C$ such that $\Lambda_o \subseteq \Lambda_r$, where $\Lambda_r$ are the non-reacting species of $C$. 
\end{mydef}
NRO-CRNs are CRSs in which the output species are produced monotonically and cannot act as a reactant in any reaction. 
 A consequence of Theorem \ref{th:SingleCase} is the following lemma, which shows that this class of CRNs can approximate any pmf with support on natural numbers, up to an arbitrarily small error.

\begin{lemma}
For any probability mass function $f : \mathbb{N}^{m} \rightarrow [0,1]$ there exists a \emph{NRO-CRS} such that the joint limit distribution of its output species approximates $f$ with arbitrarily small error under the $L^1$ norm. The approximation is exact if the support of $f$ is finite.
\end{lemma}
{ \begin{proof}
This lemma is a consequence of Theorems \ref{th:SingleCase} and  \ref{th:Multi}. In fact, by construction, all CRSs used in those theorems are non-reacting output.
\hfill $\square$
\end{proof}}
%
\subsubsection{NRO-CRS operators}
{  A \emph{NRO-CRS operator} is a NRO-CRS such that, given as input the output of certain NRO-CRSs, it produces as output a (set of) species that at steady state implement a given operation.
We define the following NRO-CRS operators and show their correctness.
\begin{mydef}
\label{operatorsCRS}
Let  $C_1=(\Lambda_1,\Lambda_{o_1},R_1,x_{0_1})$ and $C_2=(\Lambda_2,\Lambda_{o_2},R_2,$ $x_{0_2})$ be NRO-CRSs such that $\Lambda_1 \cap \Lambda_2=\emptyset$. Then, for $\lambda_{o_1}\in \Lambda_{o_1},\lambda_{o_2}\in \Lambda_{o_2}$, $\{\lambda_{out},\lambda_z,\lambda_{r_1},$ $\lambda_{r_2}\}\cap (\Lambda_1\cup \Lambda_2)=\emptyset$, $k\in\mathbf{N},p\in[0,1]$, we define the following NRO-CRS operators:
\begin{align*}
& Sum(C_1,\lambda_{o_1},C_2,\lambda_{o_2},\lambda_{out})=\\&\quad (\Lambda_1\cup \Lambda_2\cup \{ \lambda_{out}\},\{\lambda_{out}\},R_1\cup R_2 \cup \\
& \quad  \{ \lambda_{o_1} \rightarrow \lambda_{out},  \lambda_{o_2} \rightarrow \lambda_{out}\},x_0)\\
& Min(C_1,\lambda_{o_1},C_2,\lambda_{o_2},\lambda_{out})= \\
&\quad (\Lambda_1\cup \Lambda_2\cup \{ \lambda_{out}\},\{\lambda_{out}\},R_1 \cup R_2 \cup\\
&\quad  \{ \lambda_{o_1} +  \lambda_{o_2}\rightarrow \lambda_{out}\},x_0) \\
& Mul(C_1,\lambda_{o_1},k,\lambda_{out})=\\
& \quad  (\Lambda_1 \cup \{\lambda_{out}\},\{\lambda_{out}\},R_1\cup\\ 
&\quad \quad \{\lambda_{o1} \rightarrow \underbrace{\lambda_{out}+...+\lambda_{out}}_{k\, times} \},x_{0}) \\
& Div(C_1,\lambda_{o_1},k,\lambda_{out})=\\
& \quad (\Lambda_1 \cup \{\lambda_{out}\},\{\lambda_{out}\},R_1\cup\\
&\quad  \{ \underbrace{\lambda_{o_1}+...+\lambda_{o_1}}_{k \, times}\rightarrow \lambda_{out} \},x_{0})\\
& Con(C_1,\lambda_{o_1},C_2,\lambda_{o_2},p,\lambda_{out})= \\
& \quad (\Lambda_1\cup \Lambda_2\cup \{\lambda_z,\lambda_{r_1},\lambda_{r_2}, \lambda_{out}\},\{\lambda_{out}\},R_1\cup R_2 \cup\\
&\quad   \,\{\lambda_z \rightarrow^{p} \lambda_{r_1},\lambda_{z} \rightarrow^{1-p} \lambda_{r_2},\\
&\quad  \lambda_{o_1} +\lambda_{r_1} \rightarrow \lambda_{r_1}+\lambda_{out}, \lambda_{o_1} +\lambda_{r_2} \rightarrow \lambda_{r_2}+\lambda_{out}\},x_0)
\end{align*}
where $x_{0}(\lambda)=\begin{cases}
        x_{0_1}(\lambda)&\quad \text{ if $\lambda \in \Lambda_1$}\\
        x_{0_2}(\lambda)&\quad \text{ if $\lambda \in \Lambda_2$}\\
        1&\quad \text{ if $\lambda= \lambda_z$}\\
        0&\quad \text{ otherwise}\\
\end{cases}$
%
\end{mydef}
\begin{theorem}
\label{Th:Operators}
Let  $C_1=(\Lambda_1,\Lambda_{o_1},R_1,x_{0_1})$ and $C_2=(\Lambda_2,$ $\Lambda_{o_2},R_2,$ $x_{0_2})$ be NRO-CRSs such that $\Lambda_1 \cap \Lambda_2=\emptyset$. Then, for $\lambda_{o_1}\in \Lambda_{o_1},\lambda_{o_2}\in \Lambda_{o_2},$ $\lambda_{out}\not\in \Lambda_1\cup \Lambda_2$, $k\in\mathbf{N},p\in[0,1]$ we have:
$$ \pi_{\lambda_{out}}^{Sum(C_1,\lambda_{o_1},C_2,\lambda_{o_2},\lambda_{out})}= \pi_{\lambda_{o_1}}^{C_1}+\pi_{\lambda_{o_2}}^{C_2}$$
$$ \pi_{\lambda_{out}}^{Min(C_1,\lambda_{o_1},C_2,\lambda_{o_2},\lambda_{out})}=min( \pi_{\lambda_{o_1}}^{C_1},\pi_{\lambda_{o_2}}^{C_2}) $$
$$ \pi_{\lambda_{out}}^{Mul(C_1,\lambda_{o_1},k,\lambda_{out})}= k\pi_{\lambda_{o_1}}^{C_1} $$
$$ \pi_{\lambda_{out}}^{ Div(C_1,\lambda_{o_1},k,\lambda_{out})}= \frac{\pi_{\lambda_{o_1}}^{C_1}}{k}$$
$$\pi_{\lambda_{out}}^{Con(C_1,\lambda_{o_1},C_2,\lambda_{o_2},p,\lambda_{out})}= (\pi_{\lambda_{o_1}}^{C_1})_{p}:\pi_{\lambda_{o_2}}^{C_2}  $$
\end{theorem}

} 

{  
\noindent
The proof of Theorem \ref{Th:Operators} is not trivial, and is given in the next subsection. The key difficulties lie in the fact that  we need to compose stochastic processes and show that the resulting process has the required properties.
\begin{example}\label{ex-Sum}
We consider the pmfs $\pi_1$ and $\pi_2$ of Example \ref{ex-COm}. Using the results of Theorem \ref{th:SingleCase} we build the CRSs $C_1$ and $C_2$ such that $\lambda_{out_1}$ and $\lambda_{out_2}$, unique output species of $C_1$ and $C_2$ respectively, admit as steady state distribution exactly $\pi_1$ and $\pi_2$. 
$C_1=(\{\lambda_z,\lambda_1,\lambda_{1,1},\lambda_{o_1}\},$ $\{\lambda_{o_1}\},R,x_0)$ has the following reactions
\[   \lambda_{z} \rightarrow^{\frac{1}{6}} \lambda_{1,1}; \,\,\,\,\,
     \lambda_{z} \rightarrow^{\frac{5}{6}} \emptyset; \,\,\,\,\,
     \lambda_{1}+\lambda_{1,1} \rightarrow^{1} \lambda_{1,1} + \lambda_{o_1}; \,\,\,\,\,
\]
    where $\emptyset$ is the empty set and $x_0$ is such that: $x_0(\lambda_1)=3,\, x_0(\lambda_z)=1, \, x_0(\lambda_{1,1})=0,\, x_0(\lambda_{o_1})=0$.
    
The CRS $C_2$ has the following reactions
\begin{align*}
   &\lambda_{z'} \rightarrow^{\frac{1}{2}} \lambda_{1',1'};\quad  \lambda_{z'} \rightarrow^{\frac{1}{2}} \lambda_{2',2'};\\
   & \lambda_{1'}+\lambda_{1',1'} \rightarrow^{1} \lambda_{1',1'} + \lambda_{o_2};\\ & \lambda_{2'}+\lambda_{2',2'} \rightarrow^{1} \lambda_{2',2'} + \lambda_{o_2};
\end{align*}
with initial condition $x_0$ such that: $x_0(\lambda_{1'})=5,\, x_0(\lambda_z')=1, \, x_0(\lambda_{1',1'})=0, \,x_0(\lambda_{2',2'})=1,\,x_0(\lambda_{2'})=5,\,x_0(\lambda_{o_2})=0$.
Then, applying the Sum operator circuit, we add the following reactions
\[ 
    \lambda_{o_1} \rightarrow^{1} \lambda_{out}; \,\,\,\,\,
     \lambda_{o_2} \rightarrow^{1} \lambda_{out}; 
\]
$Sum(C_1,\lambda_{o_1},C_2,\lambda_{o_2},\lambda_{out})$ has unique output species $\lambda_{out}$, whose limit distribution, $\pi_{\lambda_{out}}$, is equal to $\pi_{1}+\pi_{2}$ described in Example \ref{ex-COm}.
\end{example}
 In what follows, we present in extended form the operator for convex combination, and introduce a new operator, which implements the convex distribution with external inputs ($ConE(\cdot)$). }

Considering $C_1$ and $C_2$, as previously, then we need to derive a CRS operator $Con(C_1,\lambda_{o_1},C_2,\lambda_{o_2},p,\lambda_{out})$ such that $\pi_{\lambda_{out}}=(\pi^{C_1}_{\lambda_{o_1}})_p : (\pi^{C_2}_{\lambda_{o_2}})$. That is, at steady stade, $\lambda_{out}$ equals $\pi^{C_1}_{\lambda_{o_1}}$ with probability $p$ and $\pi^{C_2}_{\lambda_{o_2}}$ with probability $1-p$. This can be done by using Theorem \ref{th:Multi} to generate a bi-dimensional synthetic coin with output species $\lambda_{r_1},\lambda_{r_2}$ such that their joint limit distribution is $$\pi_{\lambda_{r_1},\lambda_{r_2}}(y_1,y_2)=
\begin{cases}
        p       & \quad \text{if $y_1=1$ and $y_2=0$ } \\
        1-p     & \quad \text{if $y_1=0$ and $y_2=1$ } \\
     0    & \quad  \text{otherwise}\\
 \end{cases}.$$ 
That is, $\lambda_{r_1}$ and $\lambda_{r_2}$ are mutually exclusive at steady state. Using these species as catalysts in $\tau_3:\lambda_{o_1}+\lambda_{r_1}\rightarrow\lambda_{r_1}+ \lambda_{out}$ and $\tau_4:\lambda_{o_2}+\lambda_{r_2}\rightarrow\lambda_{r_2}+ \lambda_{out}$ we have exactly the desired result at steady state.
\begin{example}
Consider the following NRO-CRSs $C_1=(\{\lambda_{o_1}\},\{\lambda_{o_1}\},\{\},x_{0_1} )$ and $C_2=(\{\lambda_{o_2}\},\{\lambda_{o_2}\},\{\},x_{0_2} )$, with initial condition $x_{0_1}(\lambda_{o_1})=10$ and $x_{0_2}(\lambda_{o_2})=20$. Then, the operator $Con(C_1,\lambda_{o_1},C_2,\lambda_{o_2},0.3,\lambda_{out})$ implements the operation $\pi_{\lambda_{out}}=(\pi^{C_1}_{\lambda_{o_1}})_{0.3}( \pi^{C_2}_{\lambda_{o_2}})$ and it is given by the following reactions:
\begin{align*}
    &\lambda_z \rightarrow^{0.3} \lambda_{r_1};\quad \lambda_z \rightarrow^{0.7} \lambda_{r_2};\\& \lambda_{r_1} + \lambda_{o_1} \rightarrow \lambda_{r_1}+\lambda_{out};\quad \lambda_{r_2} + \lambda_{o_2} \rightarrow \lambda_{r_2}+\lambda_{out}
\end{align*}
with initial condition $x_0$ such that $x_0(\lambda_z)=1$,  $x_0(\lambda_{r_1})=x_0(\lambda_{r_2})=x_0(\lambda_{out})=0.$
\end{example}
Let $C_1,C_2$ be as above and $f=p_0+p_1\cdot c_1+...+p_n\cdot c_n$ with $p_1,...,p_n \in \mathbb{Q}_{[0,1]}$, $V=\{c_1,...,c_n\}$ a set of environmental variables, and $E$, an environment such that $V\subseteq dom(E)$. Then, computing a CRS operator $ConE(C_1,\lambda_{o_1},C_2,\lambda_{o_2},$ $f(E(V)),\lambda_{out})$ such that $\pi_{\lambda_{out}}=(\pi^{C_1}_{\lambda_{o_1}})_{f(E(V))} : (\pi^{C_2}_{\lambda_{o_2}})$ is a matter of extending the previous circuit.
First of all, we can derive the CRS to compute $f(E(V))$ and $1-f(E(V))$ and memorize them in some species. This can be done as $f(E(V))$ is semi-linear \cite{chen2014deterministic}. Then, as $f(E(V))\leq 1$ by assumption, we can use these species as catalysts to determine the output value of $\lambda_{out}$, as in the previous case. As shown in Sections \ref{Correctness}, this circuit, in the case of external inputs, introduces an arbitrarily small, but non-zero, error, due to the fact that there is no way to know when the computation of $f(E(V))$ terminates. 

\begin{example}
Consider the following NRO-CRSs $C_1=(\{\lambda_{o_1}\},\{\lambda_{o_1} \},\{\},x_{0_1} )$ and $C_2=(\{\lambda_{o_2}\},\{\lambda_{o_2}\},\{\},x_{0_2} )$, with initial condition $x_{0_1}(\lambda_{o_1})=10$ and $x_{0_2}(\lambda_{o_2})=20$. Then, consider the following functions $f(E(c))=E(c)$, where $E$ is a partial function assigning values to $c$, and it is assumed $0.001\leq E(c)\leq 1$ and that $E(c)\cdot 1000 \in \mathbb{N}$. Then, the operator $ConE(C_1,\lambda_{o_1},C_2,\lambda_{o_2},f,\lambda_{out})$, implements the operation $\pi_{\lambda_{out}}=(\pi^{C_1}_{\lambda_{o_1}})_{E(c)}(\pi^{C_2}_{\lambda_{o_2}})$ and it is given by the following reactions:
\begin{align*} 
&\tau_{1}:\lambda_{c} \rightarrow^{k_1} \lambda_{\mathrm{Cat}_1}+\lambda_{\mathrm{Cat}_2};\, \tau_{2}:\lambda_{Tot}+\lambda_{\mathrm{Cat}_2}\rightarrow^{k_1} \emptyset\\
&\tau_{3}:\lambda_{z}+\lambda_{\mathrm{Cat}_1} \rightarrow^{k_2} \lambda_{1};\, \tau_{4}:\lambda_{z}+\lambda_{Tot}\rightarrow^{k_2} \lambda_2\\
&\tau_{5}:\lambda_{o_1}+\lambda_{1} \rightarrow^{k_2} \lambda_{1}+\lambda_{out};\, \tau_{6}:\lambda_{o_2}+\lambda_{2} \rightarrow^{k_2} \lambda_{2}+\lambda_{out}
\end{align*}
where $\lambda_{c},\lambda_{\mathrm{Cat}_1},\lambda_{\mathrm{Cat}_2},\lambda_{z},\lambda_{1}$ and $\lambda_{2}$ are auxiliary species with initial condition $x_0$ such that $x_0(\lambda_{\mathrm{Cat}_1})=x_0(\lambda_{\mathrm{Cat}_2})$ $=x_0(\lambda_{1})=x_0(\lambda_{2})=0,$ $x_0(\lambda_{Tot})=1000, x_0(\lambda_{z})=1$, $x_0(\lambda_{c})=E(c)\cdot1000$ and $k_1 \gg k_2$. Reactions $\tau_1,\tau_2$ implement $f(E(c))$ and $1-f(E(c))$ and store these values in $\lambda_{Cat_1}$ and $\lambda_{Tot}$. These are used in reactions $\tau_3$ and $\tau_4$ to determine the probability that the steady state value of $\lambda_{out}$ is going to be determined by reaction $\tau_5$ or $\tau_6$. 
\end{example}

{ 
\subsection{Correctness of the CRS-operators}
\label{Correctness}
{ We prove the correctness of Theorem \ref{Th:Operators}. For the sake of simplicity, we consider only the Sum operator, as other operators have similar proofs. The key idea of the proof is to make use of Equation \eqref{StochasticSemantics} to show that the resulting CRS implements the desired operation at steady state.
}

\begin{proposition}\label{Sum_Prop}
Let $C_1=(\Lambda_1,\Lambda_{o_1},R_1,x_{0_1}), \, C_2=(\Lambda_2,$ $\Lambda_{o_2},R_2,x_{0_2})$ be NRO-CRSs such that $\Lambda_1 \cap \Lambda_2=\emptyset$ and $\{\lambda_{out}\}\cap (\Lambda_1 \cup \Lambda_2)=\emptyset$. Then for $\lambda_{o_1} \in \Lambda_{o_1}$ and $\lambda_{o_2} \in \Lambda_{o_2}$ the CRS $Sum(C_1,\lambda_{o_1},C_2,\lambda_{o_2},\lambda_{out})=C_c$ is such that $\pi^{C_c}_{\lambda_{out}}= \pi_{\lambda_{o_1}}^{C_1}+\pi_{\lambda_{o_2}}^{C_2}$.
\end{proposition}

  \begin{proof}
Consider  the counting processes $J_{\lambda_{o_1}}^{C_c}$ and $J_{\lambda_{o_2}}^{C_c}$, acording to the stocahstic model introduced in \eqref{StochasticSemantics}, which give the number of molecules of $\lambda_{o_1}$ and $\lambda_{o_2}$ produced until time $t$ in  $C_c$. Using Eqn \eqref{StochasticSemantics} we have
\[
J_{\lambda_{o_1}}^{C_c}(t)=\sum_{\tau \in R_1\cup R_2 \cup\{\tau_{s_1}, \tau_{s_2}\}}p^{\lambda_{o_1}}_{\tau}Y_{\tau}(\int_0^t \! \alpha_{\tau}(X^{C_c}(s)) \, \mathrm{d}s)
\]
\[
J_{\lambda_{o_2}}^{C_c}(t)=\sum_{\tau \in R_1\cup R_2 \cup\{\tau_{s_1}, \tau_{s_2}\}}p^{\lambda_{o_2}}_{\tau}Y_{\tau}(\int_0^t \! \alpha_{\tau}(X^{C_c}(s)) \, \mathrm{d}s)
\]
where $p^{\lambda_{o_1}}_{\tau}$ and $p^{\lambda_{o_2}}_{\tau}$ represent the number of molecules of $\lambda_{o_1}$ and $\lambda_{o_2}$ produced by the occurrence of reaction $\tau$.
Recall that $\tau_{s_1}$ and $\tau_{s_2}$ are such that $\tau_{s_1}: \lambda_{o_1}\to \lambda_{out}$ and $\tau_{s_2}: \lambda_{s_2}\to \lambda_{out}$ and $\Lambda_1 \cap \Lambda_2=\emptyset$. As a consequence,  $p^{\lambda_{o_1}}_{\tau_{s_1}}=p^{\lambda_{o_1}}_{\tau_{s_2}}=p^{\lambda_{o_2}}_{\tau_{s_1}}=p^{\lambda_{o_2}}_{\tau_{s_2}}=0$ and we can write
\begin{align*}
J_{\lambda_{o_1}}^{C_c}(t)=&\sum_{\tau \in R_1\cup R_2 \cup\{\tau_{s_1}, \tau_{s_2}\}}p^{\lambda_{o_1}}_{\tau}Y_{\tau}(\int_0^t \! \alpha_{\tau}(X^{C_c}(s)) \, \mathrm{d}s)=\\ &\sum_{\tau \in R_1}p^{\lambda_{o_1}}_{\tau}Y_{\tau}(\int_0^t \! \alpha_{\tau}(X^{C_c}(s)) \, \mathrm{d}s)
\end{align*}
and 
\begin{align*}
J_{\lambda_{o_2}}^{C_c}&(t)=\\
&\sum_{\tau \in R_1\cup R_2 \cup\{\tau_{s_1}, \tau_{s_2}\}}p^{\lambda_{o_2}}_{\tau}Y_{\tau}(\int_0^t \! \alpha_{\tau}(X^{C_c}(s)) \, \mathrm{d}s)=\\
&\quad \sum_{\tau \in  R_2 }p^{\lambda_{o_2}}_{\tau}Y_{\tau}(\int_0^t \! \alpha_{\tau}(X^{C_c}(s)) \, \mathrm{d}s)
\end{align*} 
Moreover, $r_{\tau_{s_1}}^{\lambda}=p_{\tau_{s_1}}^{\lambda}=r_{\tau_{s_2}}^{\lambda}=p_{\tau_{s_2}}^{\lambda}=0$ for any $\lambda \in \Lambda-\{\lambda_{out},\lambda_{o_1},\lambda_{o_2}\}$, that is, $\tau_{s_1}$ and $\tau_{s_2}$ do not produce or consume any species in $\Lambda-\{\lambda_{out},\lambda_{o_1},\lambda_{o_2}\}$. As a consequence, because $x_0(\lambda)=x_{0_1}(\lambda)$ for all $\lambda \in \Lambda_1-\{\lambda_{o_1}\}$, we have  $\int_0^t \! \alpha_{\tau}(X^{C_c}(s)) \, \mathrm{d}s=\int_0^t \! \alpha_{\tau}(X^{C_1}(s)) \, \mathrm{d}s$ for all $\tau \in R_1$ .
In exactly the same way, it is possible to show that the same relation holds for $\lambda_{o_2}$ with respect to $X^{C_2}$, and as a consequence it is also true that $\int_0^t \! \alpha_{\tau}(X^{C_c}(s)) \, \mathrm{d}s=\int_0^t \! \alpha_{\tau}(X^{C_2}(s)) \, \mathrm{d}s$ for all $\tau \in R_2$. As a result:
\begin{align*}
\sum_{\tau \in R_1}p^{\lambda_{o_1}}_{\tau}Y_{\tau}&(\int_0^t \! \alpha_{\tau}(X^{C_c}(s)) \, \mathrm{d}s)=\\
&\sum_{\tau \in R_1}p^{\lambda_{o_1}}_{\tau}Y_{\tau}(\int_0^t \! \alpha_{\tau}(X^{C_1}(s)) \, \mathrm{d}s)
\end{align*}
\begin{align*}
\sum_{\tau \in  R_2 }p^{\lambda_{o_2}}_{\tau}Y_{\tau}&(\int_0^t \! \alpha_{\tau}(X^{C_c}(s)) \, \mathrm{d}s)=\\
&\sum_{\tau \in  R_2 }p^{\lambda_{o_2}}_{\tau}Y_{\tau}(\int_0^t \! \alpha_{\tau}(X^{C_2}(s)) \, \mathrm{d}s)
\end{align*}
Considering that $\lambda_{o_1}$ is an output species in $C_1$ and $\lambda_{o_2}$ is an output species in $C_2$, that is, NRO-CRSs, then for any $\tau \in R_1$ we have that $\upsilon_{\tau}^{\lambda_{o_1}}=p_{\tau}^{\lambda_{o_1}}$ and for any $\tau \in R_2$ $\upsilon_{\tau}^{\lambda_{o_2}}=p_{\tau}^{\lambda_{o_2}}$. As a consequence:
\begin{align*}
X^{C_1}_{\lambda_{o_1}}(t)=&X^{C_s}_{\lambda_{o_1}}(0)+\sum_{\tau \in R_1}{p}^{\lambda_{o_1}}_{\tau} Y_{\tau}(\int_0^t \! \alpha_{\tau}(X^{C_1}(s)) \, \mathrm{d}s )=\\
&\quad X^{C_1}_{\lambda_{o_1}}(0)+J^{C_c}_{\lambda_{o_1}}(t)
\end{align*}
\begin{align*}
X^{C_2}_{\lambda_{o_2}}(t)=&X^{C_2}_{\lambda_{o_2}}(0)+\sum_{\tau \in R_2}{p}^{\lambda_{o_2}}_{\tau} Y_{\tau}(\int_0^t \! \alpha_{\tau}(X^{C_2}(s)) \, \mathrm{d}s )=\\
&\quad X^{C_2}_{\lambda_{o_2}}(0)+J^{C_c}_{\lambda_{o_2}}(t) 
\end{align*}
According to the fact that in the composed NRO-CRS $\lambda_{out}$ is produced only by $\tau_{s_1}$ and $\tau_{s_2}$ such that $p_{\tau_{s_1}}^{\lambda_{out}}=p_{\tau_{s_2}}^{\lambda_{out}}=1 $, and that $\lambda_{out}$ is not consumed in any reaction, and its initial molecular count is $0$. Then, it is possible to write:
 \begin{align*}
  X^{C_c}_{\lambda_{out}}(t)=&0 + Y_{\tau_{s_1}}(\int_{0}^t \! \alpha_{\tau_c}(X^{C_c}(s))\! \mathrm{d}s)+\\
 &Y_{\tau_{s_2}}(\int_{0}^t \! \alpha_{\tau_c}(X^{C_c}(s))\! \mathrm{d}s)
 \end{align*}
 In the same way we can define the stochastic model for the number of molecules of $\lambda_{o_1}$ or $\lambda_{o_2}$ present in $C_c$ at a given time, as given by the number of molecules produced minus the number of molecules consumed. As $\lambda_{o_1}$ and $\lambda_{o_2}$ are consumed only by $\tau_{s_1}$ and $\tau_{s_2}$, and they are not reactant in any other reaction, we have:
 \begin{align*}
  X^{C_c}_{\lambda_{o_1}+\lambda_{o_2}}(t)=&X^{C_c}_{\lambda_{o_1}}(0) + X^{C_c}_{\lambda_{o_2}}(0) + \\ 
  &\sum_{\tau \in R_1}{p}^{\lambda_{o_1}}_{\tau} Y_{\tau}(\int_0^t \! \alpha_{\tau}(X^{C_1}(s)) \, \mathrm{d}s )+\\
 & \sum_{\tau \in R_2}{p}^{\lambda_{o_2}}_{\tau} Y_{\tau}(\int_0^t \! \alpha_{\tau}(X^{C_2}(s)) \, \mathrm{d}s ) -\\
 & Y_{\tau_{s_1}}(\int_{0}^t \! \alpha_{\tau_c}(X^{C_c}(s))\! \mathrm{d}s)-\\ &Y_{\tau_{s_2}}(\int_{0}^t \! \alpha_{\tau_c}(X^{C_c}(s))\! \mathrm{d}s)= \\  X^{C_1}_{\lambda_{o_1}}(t)+&X^{C_2}_{\lambda_{o_2}}(t)-X^{C_c}_{\lambda_{out}}(t)
 \end{align*}
 because $X^{C_c}_{\lambda_{o_1}}(0)=X^{C_1}_{\lambda_{o_1}}(0)$ and $X^{C_c}_{\lambda_{o_2}}(0)=X^{C_1}_{\lambda_{o_2}}(0)$ by assumption.
 
The set of reachable states from $x_0$ in $X^{C_c}$ is finite because the set of reachable states from $x_{0_1}$ in $X^{C_1}$ and from $x_{0_2}$ in $X^{C_2}$ are finite by assumption and $\tau_c$, in a finite time, can fire only a finite number of times. This implies that $X^{C_c}(t)$ for $t \rightarrow \infty$ will reach a bottom strongly connected component (BSCC) of the underlying graph of the state space, with probability $1$ in finite time, because of a well known result of CTMC theory \cite{kwiatkowska2007stochastic}.
In a BSCC, any pair of configurations $x_1$ and $x_2$ are such that $x_1 \rightarrow^* x_2$ and $x_2 \rightarrow^* x_1$. Therefore, any configuration $x$ in any BSCC reachable by $X^{C_c}$ from $x_0$ is such that $x(\lambda_{o_1})=x_0(\lambda_{o_2})=0$, because in a configuration $x_i$ where $x_i(\lambda_{o_1})>0$ or $x_i(\lambda_{o_2})>0$ it is always possible to reach a configuration $x_j$ where $x_j(\lambda_{o_1})=x_i(\lambda_{o_1})-1$ or $x_j(\lambda_{o_2})=x_i(\lambda_{o_2})-1$ and $x_j(\lambda_{out})=x_i(\lambda_{out})+1$, but then there is no way to reach $x_i$ from $x_j$ because $\lambda_{out}$ is not reactant in any reaction in $R_1\cup R_2 \cup \{\tau_{s_1},\tau_{s_2}\}$.
Therefore  
\[
\lim_{t \rightarrow \infty}Prob(X^{C_c}_{\lambda_{o_1}+\lambda_{o_2}}(t)=0|X^{C_c}(0)=x_0)=1 \implies \]
\begin{align*}
\lim_{t \rightarrow \infty} Prob(& X^{C_1}_{\lambda_{o_1}}(t)+X^{C_2}_{\lambda_{o_2}}(t)-X^{C_c}_{\lambda_{out}}(t) = 0|\\ &X^{C_c}(0)=x_0,X^{C_1}(0)=x_{0_1},\\
&X^{C_2}(0)=x_{0_2})=1 \implies
\end{align*}
\begin{align*}
\lim_{t \rightarrow \infty}Prob(&X^{C_c}_{\lambda_{out}}(t)=X^{C_1}_{\lambda_{o_1}}(t)+X^{C_2}_{\lambda_{o_2}}(t)|\\&X^{C_c}(0)=x_0,X^{C_1}(0)=x_{0_1},\\
&X^{C_2}(0)=x_{0_2})=1
 \end{align*}
 This concludes the proof.
\hfill $\square$
 \end{proof}

}

\subsection{Compiling into the class of NRO-CRSs} 
Given a formula $P$ as defined in Definition \ref{LangSynt}, then $[\![P]\!]_E$ associates to $P$ and an environment $E$ a pmf. We now define a translation of $P$, $T(P)$, into the class of NRO-CRSs 
that guarantees that the unique output species of $T(P)$, at steady state, approximates $[\![P]\!]_E$ with arbitrarily small error for any environment $E$ such that $V(P)\subseteq dom(E)$.
In order to define such a translation we need the following renaming operator.
\begin{mydef}\label{op0-ren}
Given a CRS $C=(\Lambda,R,x_0)$, for $\lambda_t \in \Lambda$ and $\lambda_1 \not\in \Lambda$ we define the renaming operator $C\{\lambda_1 \leftarrow \lambda_t \}=C_c$ such that $C_c=((\Lambda-\{\lambda_t\})\cup\{ \lambda_1\}, R\{\lambda_1 \leftarrow \lambda_t \},x_0')$, where $R\{\lambda_1 \leftarrow \lambda_t \}$  substitutes any occurrence of $\lambda_t$ with an occurrence of $\lambda_1$ for any $\tau \in R$ and $x_0'(\lambda)=\{x_0(\lambda) \,\,\,  \text{if $\lambda \neq \lambda_t$}; \, x_0(\lambda_t) \,\,\, $ $  \text{if $\lambda=\lambda_1$} \} $.

\end{mydef}
This operator produces a new CRS where any occurrence of a species is substituted with an occurrence of another species previously not present. 
\begin{mydef}{(Translation into NRO-CRSs)\label{translation}
Define the mapping $T$ by induction on syntax of formulae $P$:}\label{Trans-DEnsem}
\begin{align*}
T & (one)= (\{\lambda_{out}\},\{\lambda_{out}\},\emptyset,x_0)\,\,\, \text{with $x_0(\lambda_{out})=1$};&&\nonumber \\
T & (zero)= (\{\lambda_{out}\},\{\lambda_{out}\},\emptyset,x_0)\,\,\, \text{with $x_0(\lambda_{out})=0$};&&\nonumber\\
T & (P_1+P_2)=\\
&Sum(T(P_1)\{\lambda_{o_1} \leftarrow \lambda_{out}\},\\
&\quad \lambda_{o_1},T(P_2)\{\lambda_{o_2} \leftarrow \lambda_{out}\},\lambda_{o_2},\lambda_{out});&&\nonumber\\
T & (k\cdot P)=\\
&Div(Mul(T(P)\{\lambda_{o} \leftarrow \lambda_{out}\},\\
&\quad \lambda_{o},k_1,\lambda_{out})\{\lambda_{o'} \leftarrow \lambda_{out}\}),\lambda_{o'},k_2,\lambda_{out}); &&\nonumber\\
T & (min(P_1,P_2)=\\
&Min(T(P_1)\{\lambda_{o_1} \leftarrow \lambda_{out}\},\\
&\lambda_{o_1},T(P_2)\{\lambda_{o_2} \leftarrow \lambda_{out}\},\lambda_{o_2},\lambda_{out});&&\nonumber\\
T & ((P_1)_D:P_2)=&&\nonumber\\
&\left\{ 
 \begin{array}{l l}
    Con(T(P_1)\{\lambda_{o_1} \leftarrow \lambda_{out}\},\lambda_{o_1},T(P_2)\{\lambda_{o_2} \leftarrow \lambda_{out}\},\\
    \quad \quad \quad \lambda_{o_2},D,\lambda_{out}), \quad \quad \text{if $D=p$ }\\
    ConE(T(P_1)\{\lambda_{o_1} \leftarrow \lambda_{out}\},\lambda_{o_1},T(P_2)\{\lambda_{o_2} \leftarrow \lambda_{out}\},\\
 \quad \quad \quad   \lambda_{o_2},D,\lambda_{out}), \quad  \quad \text{\, if $D=p+\sum_{i=1}^{m}p_i\cdot c_i$}\\
  \end{array} \right.   
\end{align*}
  for $m>1$, $k \in \mathbb{Q}_{>0}$, $k_1,k_2 \in \mathbb{N}$ such that $k=\frac{k_1}{k_2}$ and formulae $P_1,P_2$, which are assumed to not contain species $\lambda_{o_1},\lambda_{o_2}$.
\end{mydef}

\begin{example}
Consider the formula $P_1=(one)_{0.001\cdot c + 0.2}$ $(4\cdot one)+(2 \cdot one)_{0.4}(3\cdot one)$ of Example \ref{pmf}, and an environment $E$ such that $0.000125 \leq E(c) \leq 1$ and suppose $E(c) \cdot 800 \in \mathbb{N}$. We show how the translation defined in Definition \ref{translation} produces a NRO-CRS $C$ with output species $\lambda_{out}$ such that $\pi_{\lambda_{out}}=[\![P_1]\!]_E$. Consider the following NRO-CRSs $C_1,C_2,C_3,C_4$ defined as $C_1=(\{\lambda_{c_1}\},\{\lambda_{c_1}\},\{\},x'_0)$ with $x_0(\lambda_{c_1})=1$, $C_2=(\{\lambda_{c_2}\},$ $\{\lambda_{c_2}\},\{\},$ $x_0)$ with $x_0(\lambda_{c_2})=1$,  $C_3=(\{\lambda_{c_3}\},\{\lambda_{c_3}\},\{\},$ $x_0)$ with $x_0(\lambda_{c_3})=1$, and $C_4=(\{\lambda_{c_4}\},\{\lambda_{c_4}\},\{\},$   $x_0)$ with $x_0(\lambda_{c_2})=1$. Then, we have that :
\begin{align*}
C^c_1=&ConE(C_1,\lambda_{c_1},Mul(C_2,\lambda_{c_2},4,\lambda_{out})\{\lambda_{o_2}\leftarrow\lambda_{out}\},\\
&\lambda_{o_2},0.001\cdot c+0.2,\lambda_{out_1})\\
C^c_2=&Con(Mul(C_3,\lambda_{c_3},2,\lambda_{out})\{\lambda_{o_3}\leftarrow\lambda_{out}\},\lambda_{o_3},\\& Mul(C_4,\lambda_{c_4},3,\lambda_{out})\{\lambda_{o_4}\leftarrow\lambda_{out}\},\lambda_{o_4},0.4,\lambda_{out_2})
\end{align*}are such that $\pi_{\lambda_{out_1}}=\left\{ 
  \begin{array}{l l}
    (0.001\cdot [\![c]\!]_E+0.2), \,\,\, \,  \text{if $y=1$}\\
    1-(0.001\cdot [\![c]\!]_E+0.2), \,\,\, \, \, \,\\
    \quad \quad \quad \quad \quad \quad \quad \quad \quad \text{if $y=4$}\\
    0, \, \, \, \,\,\, \text{otherwise}\\
  \end{array} \right.$, and  $\pi_{\lambda_{out_2}}=\left\{ 
  \begin{array}{l l}
    0.4, \,\,\, \, \, \,  \text{if $y=2$}\\
    0.6, \,\,\, \, \, \,  \text{if $y=3$}\\
    0, \, \, \, \,\,\, \text{otherwise}\\
  \end{array} \right.$.
Then, consider the CRS  $C=Sum(C^c_1\{\lambda_{t_1\leftarrow\lambda_{out_1}}\},\lambda_{t_1},$ $C^c_2\{\lambda_{t_2\leftarrow\lambda_{out_2}}\},$ $\lambda_{t_2},\lambda_{out})$ and we have $\pi_{\lambda_{out}}=[\![P_1]\!]_E$ with arbitrarily small error.
The reactions of $C$ are shown below
\begin{align*}
\text{$Mul$ on inputs} \{&\tau_1: \lambda_{C_2}\rightarrow 4\lambda_{o_1};\quad \tau_2: \lambda_{C_3}\rightarrow 2\lambda_{o_2};\\ 
&\tau_3: \lambda_{C_4}\rightarrow 3\lambda_{o_3}. 
\end{align*}
\begin{align*} \text{$C^c_1$} \left\{\begin{array}{l}
\tau_4: \lambda_{env}\rightarrow^{k} \lambda_{cat_1}+\lambda_{cat_2};\\
\tau_5: \lambda_{cat_1} + \lambda_{z} \rightarrow \lambda_{1} \\
\tau_6: \lambda_{cat_2} + \lambda_{tot} \rightarrow^{k} \emptyset;\\ 
\tau_7: \lambda_{tot} + \lambda_{z} \rightarrow \lambda_{2}\quad\quad\quad\quad \quad\quad\quad\quad  \quad\quad\quad\quad\,\\
\tau_8: \lambda_{1} + \lambda_{o_1} \rightarrow \lambda_{o_1} + \lambda_{out_1};\\ \tau_9: \lambda_{2} + \lambda_{o_2} \rightarrow \lambda_{o_2}+ \lambda_{out_1}\end{array} \right.
\end{align*}
\[\text{$C^c_2$} \left\{\begin{array}{l} 
\tau_{10}: \lambda_{z_1}\rightarrow^{0.6} \lambda_{r_1};\\
 \tau_{11}: \lambda_{z_1} \rightarrow^{0.4} \lambda_{r_2}\quad\quad\quad\quad\quad\quad\quad \quad\quad\quad\quad \quad\quad \quad\quad \,\\
\tau_{12}: \lambda_{r_1} + \lambda_{o_3}; \rightarrow \lambda_{r_1}+\lambda_{out_2};\\ \tau_{13}:\lambda_{r_2} + \lambda_{o_4} \rightarrow \lambda_{r_2}+\lambda_{out_2} \end{array} \right. \]
\[\text{$Sum$} \left\{\tau_{14}:\lambda_{out_1}\rightarrow \lambda_{out};\quad \tau_{15}:\lambda_{out_2}\rightarrow \lambda_{out} \quad\quad\quad\quad\quad\quad \quad \quad\quad \, \quad\quad\quad \right. \]
for $k\gg 1$ and initial condition such that $x_0(\lambda_{env})=E(c)\cdot 800$, $x_0(\lambda_{tot})=800$, $x_0(\lambda_{z})=x_0(\lambda_{z_1})=x_0(\lambda_{z_2})=1=x_0(\lambda_{c_1})=x_0(\lambda_{c_2})=x_0(\lambda_{c_3})=x_0(\lambda_{c_4})=1$, and all other species initialized with $0$ molecules.
\end{example}
\begin{proposition}
For any formula $P$ we have that $T(P)$ is a NRO-CRS.
\end{proposition}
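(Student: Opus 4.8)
The plan is to proceed by structural induction on the syntax of the formula $P$ given in Definition \ref{LangSynt}, maintaining throughout the invariant that the unique designated output species of $T(P)$ is \emph{non-reacting}, i.e.\ it never appears in the source complex of any reaction. For the base cases $one$ and $zero$, the translation in Definition \ref{translation} yields a CRS with an empty reaction set, so the output species $\lambda_{out}$ vacuously occurs in no source complex and thus belongs to $\Lambda_r$; since $\Lambda_o=\{\lambda_{out}\}\subseteq\Lambda_r$, these are NRO-CRSs by Definition \ref{stable CRN-defn}.

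For the inductive step I would assume, for each subformula, that its translation is a NRO-CRS whose output species is non-reacting. A preliminary observation is that the renaming operator of Definition \ref{op0-ren} is a bijective relabelling of species, so $T(P_i)\{\lambda_{o_i}\leftarrow\lambda_{out}\}$ is again a NRO-CRS whose (renamed) output species $\lambda_{o_i}$ remains non-reacting in that intermediate CRS. I would then inspect each compositional operator in turn --- Sum, Min, Mul, Div from Table \ref{tab:Circuits}, and Con, ConE --- and verify a single uniform fact: every reaction that the operator \emph{adds} to $R_1\cup R_2$ introduces the fresh output species $\lambda_{out}$ solely as a product. Combined with the freshness assumption $\{\lambda_{out}\}\cap(\Lambda_1\cup\Lambda_2)=\emptyset$, which guarantees $\lambda_{out}$ occurs nowhere in $R_1\cup R_2$, this shows that $\lambda_{out}$ never appears as a reactant in the composed reaction set; hence $\lambda_{out}\in\Lambda_r$ and, being the sole output species, the composed CRS is a NRO-CRS.

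The step requiring the most care is the bookkeeping around renaming. When we rename the old output $\lambda_{out}$ of $T(P_i)$ to $\lambda_{o_i}$, the glue reactions of each operator (e.g.\ $\lambda_{o_1}\rightarrow\lambda_{out}$ for Sum, $\lambda_{o_1}+\lambda_{o_2}\rightarrow\lambda_{out}$ for Min, the catalytic $\lambda_{o_1}+\lambda_{r_1}\rightarrow\lambda_{r_1}+\lambda_{out}$ for Con) consume $\lambda_{o_i}$, so the formerly non-reacting species loses that status. This is consistent because the output designation has migrated to the new fresh species; the invariant being propagated is precisely ``the current output is non-reacting'', and each operator re-establishes it for a freshly introduced $\lambda_{out}$ while demoting the previous output to an internal species. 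For the Con and ConE cases I would additionally check that all auxiliary species --- the synthetic-coin species $\lambda_z,\lambda_{r_1},\lambda_{r_2}$ and, for ConE, the catalyst and counter species $\lambda_{\mathrm{Cat}_1},\lambda_{\mathrm{Cat}_2},\lambda_{Tot},\dots$ --- are internal rather than output species, so their appearance as reactants is permitted and does not bear on the non-reacting requirement, while $\lambda_{out}$ is still produced only by $\tau_5$ and $\tau_6$. Once this uniform product-only property of $\lambda_{out}$ is confirmed for every operator, the induction closes and $T(P)$ is a NRO-CRS for every $P$.
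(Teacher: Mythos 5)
Your proposal is correct and follows essentially the same route as the paper: structural induction with $T(one)$, $T(zero)$ as base cases and a check that each composition operator ($Sum$, $Min$, $Mul$, $Div$, $Con$, $ConE$) preserves the non-reacting-output property. The paper's own proof states this in two sentences, deferring entirely to "the definition of the operators"; your version simply makes explicit the uniform product-only occurrence of the fresh $\lambda_{out}$ and the renaming bookkeeping that the paper leaves implicit.
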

\begin{proof}
The proof is by structural induction. The base cases are $T(zero)$ and $T(one)$, which are NRO-CRSs by definition. 
Assuming $T(P_1)$ and $T(P_2)$ are NRO-CRNs then application of operators of $sum$, $Mul$, $Div$, $Min$, $Con$ and $ConE$ on these CRSs produces a NRO-CRNs by definition of the operators.

\hfill $\square$
\end{proof}
Given a formula $P$ and an environment $E$ such that $V(P)\subseteq dom(E)$, the following theorem guarantees the soundness of $T(P)$ with respect to $[\![P]\!]_E$. In order to prove the soundness of our translation we consider the measure of the multiplicative error between two pmfs $f_1$ and $f_2$ with values in $\mathbb{N}^m$, $m>0$ as $e_m(f_1,f_2)=\max_{n \in \mathbb{N}^m}\min(\frac{f_1(n)}{f_2(n)},\frac{f_2(n)}{f_1(n)})$.   
\begin{theorem}{(Soundness)}\label{soundness}
Given a formula $P$ and $\lambda_{out}$, unique output species of $T(P)$, then, for an environment $E$ such that $V(P)\subseteq dom(E)$, it holds that $\pi^{T(P)}_{\lambda_{out}} = [\![P]\!]_E$ with arbitrarily small error under multiplicative error measure.
\end{theorem}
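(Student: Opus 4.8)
The plan is to proceed by structural induction on the syntax of $P$ given in Definition \ref{LangSynt}, mirroring the inductive definitions of the translation $T$ (Definition \ref{translation}) and of the semantics $[\![\cdot]\!]_E$. The two base cases are immediate: $T(one)$ and $T(zero)$ are single-species NRO-CRSs with no reactions and initial output count $1$ and $0$ respectively, so their (trivially stationary) output distributions are exactly $\pi_{one}$ and $\pi_{zero}$. For the inductive step I would take as induction hypothesis that, for each immediate subformula $P_i$, the output species $\lambda_{o_i}$ of $T(P_i)$ satisfies $\pi^{T(P_i)}_{\lambda_{o_i}}=[\![P_i]\!]_E$ up to arbitrarily small multiplicative error, and then show that the composite CRS obtained by applying the corresponding operator of Table \ref{tab:Circuits} (or $Con$/$ConE$) realizes the matching pmf operation of Definition \ref{Op-defn}.

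The central tool is the random time-change (Poisson) representation of the induced CTMC \cite{anderson2011continuous}, which expresses the count of each species as a sum of independent unit-rate Poisson processes, one per reaction, composed with the integrated propensities. Since the translation keeps the species sets of the two subnetworks disjoint ($\Lambda_1\cap\Lambda_2=\emptyset$) and the only new reactions read or consume the output species of the subnetworks, I would first argue that the two subprocesses are driven by disjoint families of Poisson clocks and are therefore independent. The crucial structural fact, guaranteed by the NRO property (Definition \ref{stable CRN-defn}), is that in each subnetwork the output species is never a reactant, so the propensities of its production reactions do not depend on its own count; hence its production dynamics — and therefore the distribution of the \emph{total} amount it ever produces — are unaffected by the downstream combining reactions that consume it in the composite. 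Consequently the final stabilized counts $N_1,N_2$ of $\lambda_{o_1},\lambda_{o_2}$ remain distributed as $[\![P_1]\!]_E,[\![P_2]\!]_E$ and independent, and finiteness of the support together with monotonicity of the new output $\lambda_{out}$ ensures the composite limit distribution exists.

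Granting independence of $N_1,N_2$, each operator reduces to a distributional identity for independent random variables. For $Sum$ the reactions $\lambda_{o_1}\rightarrow\lambda_{out}$ and $\lambda_{o_2}\rightarrow\lambda_{out}$ transfer all produced mass into $\lambda_{out}$, whose final value is $N_1+N_2$, i.e. the convolution of Definition \ref{Op-defn}; for $Min$ the reaction $\lambda_{o_1}+\lambda_{o_2}\rightarrow\lambda_{out}$ greedily pairs molecules and fires exactly $\min(N_1,N_2)$ times; $Mul$ by $k_1$ and $Div$ by $k_2$ yield $k_1 N_1$ and $\lfloor N_1/k_2\rfloor$, matching the scaling and flooring operations. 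For the convex combination $(P_1)_D:P_2$ I would invoke Theorem \ref{th:Multi} to build the mutually exclusive ``synthetic coin'' species $\lambda_{r_1},\lambda_{r_2}$ and use them as catalysts in $\tau_3,\tau_4$, so that exactly one sub-output is copied into $\lambda_{out}$, with selection probabilities $[\![D]\!]_E$ and $1-[\![D]\!]_E$ independent of $N_1,N_2$, giving the convex-combination operation.

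The main obstacle is twofold. First, making rigorous the claim that the combining reactions do not perturb the already-realized sub-distributions: this is precisely where the Poisson representation and the non-reacting-output discipline must be combined carefully to conclude that ``total mass produced'' is a well-defined random variable with the inductively assumed law, independent across the two subnetworks. Second, propagating the error through the induction. Only the $ConE$ case introduces a strictly positive error, because the semilinear computation of $[\![D]\!]_E$ cannot signal its own termination, so with small probability a catalyst value is read before it has stabilized. I would therefore work with the multiplicative error measure $e_m$ rather than $d_1$ precisely because multiplicative discrepancies aggregate multiplicatively under the operators of Definition \ref{Op-defn}, so bounding the per-$ConE$ discrepancy by any prescribed factor keeps the total error under $e_m$ within any prescribed factor of exactness; the exact, error-free cases are exactly those formulae containing no $ConE$ node. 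The detailed process-level arguments are deferred to \cite{cardelli2016programming}.
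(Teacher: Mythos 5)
Your proposal is correct and follows essentially the same route as the paper: structural induction on the formula with $one$/$zero$ as trivial base cases, correctness of each CRS operator ($Sum$, $Min$, $Mul$/$Div$, $Con$, $ConE$) established via the Poisson time-change representation combined with the non-reacting-output property and a BSCC/finiteness argument, and the observation that only $ConE$ contributes a (rate-tunable, arbitrarily small) error. The only presentational difference is that the paper factors the process-level arguments into separate propositions (one per operator) and keeps the induction itself as a short case analysis invoking them, whereas you inline that reasoning into the inductive step.
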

The proof follows by structural induction.
\begin{remark}
A formula $P$ is finite by definition, so Theorem \ref{soundness} is valid because the only production rule which can introduce an error is $(P_1)_D:(P_2)$ in the case $D\neq p_0$, and we can always find reaction rates to make the total probability of error arbitrarily small.
Note that, by using the results of \cite{soloveichik2008computation}, it would also be possible to show that the total error can be kept arbitrarily small, even if a formula is composed from an unbounded number of production rules. This requires small modifications to the ConE operator following ideas in \cite{soloveichik2008computation}.
\end{remark}
Observe that compositional translation, as defined in Definition \ref{Trans-DEnsem}, generally produces more compact CRNs with respect to the direct translation in Theorem \ref{th:SingleCase}, and in both cases the output is non-reacting, so the resulting CRN can be used for composition. For a distribution with support $J$ direct translation yields a CRN with $2|J|$ reactions, whereas, for instance, the support of the sum pmf has the cardinality of the Cartesian product of the supports of the input pmfs. 

\section{Discussion}
Our goal was to explore the capacity of CRNs to compute with distributions. This is an important goal because, when molecular interactions are in low number, as is common in various experimental scenarios \cite{qian2014parallel}, deterministic methods are not accurate, and stochasticity is essential for cellular circuits. Moreover, there is a large body of literature in biology where stochasticity has been shown to be essential 
and not only a nuisance \cite{eldar2010functional}. 
Our 
work is a step forward towards better understanding of molecular computation. 
In this paper we focused on error-free computation for distributions. It would be interesting to understand and characterize what would happen when relaxing this constraint. That is, if we admit a probabilistically (arbitrarily) small error, does the ability of CRNs to compute on distributions increase? {  Another interesting topic to investigate is whether we can relax the constraint that the output species are produced monotonically. In fact, this is a constraint that is generally not present in natural systems where species undergo production and degradation reactions. {More specifically, we require that a CRN will reach a state where no reactions can happen. In terms of sampling from the distribution, this would require sampling an ensemble of cells since sampling a single cell would yield a single state.}}
Also, we would like to address the problem if it is possible to implement distributions in CRNs without leaders (species being present with initial number of molecules equal to $1$) {and without knowing the precise initial number of molecules for each species. Our constructions, except for the uniform distribution, crucially rely on these assumptions, though may be challenging to obtain in technologies such as DNA strand displacement \cite{soloveichik2010dna}. As a consequence, DNA implementation would become easier if these constraints can be removed. However, it is worth noting that, in a practical scenario, leaders can be thought of as single genes or localized structures \cite{qian2014parallel}, and there exist CRN techniques to produce given concentrations independently of initial conditions \cite{shinar2010structural}.}


\bibliographystyle{spmpsci}      
\bibliography{biblio}   

\end{document}